\DeclarePairedDelimiter{\parens}{\lparen}{\rparen}
\DeclareMathOperator{\prange}{range}
\DeclareMathOperator*{\argmin}{arg\,min}
\DeclareMathOperator{\pcost}{cost}
\DeclareMathOperator{\pexpected}{E}
\DeclareMathOperator{\pball}{B}
\DeclareMathOperator{\perr}{rerror}
\newcommand{\expected}[1]{\ensuremath{\pexpected\left[#1\right]}}
\newcommand{\confer}{cf.~}
\newcommand{\range}[3]{\ensuremath{\prange\parens*{#1,#2,#3}}}
\newcommand{\ind}[1]{\ensuremath{\mathbbm{1}\parens*{#1}}}
\newcommand{\serror}[2]{\ensuremath{\perr\parens*{#1,#2}}}
\newcommand{\cost}[2]{\ensuremath{\pcost\parens*{#1,#2}}}
\newcommand{\scost}[2]{\ensuremath{\widehat{\pcost}\parens*{#1,#2}}}
\newcommand{\ucost}[2]{\ensuremath{\widetilde{\pcost}\parens*{#1,#2}}}
\let\epsilon\relax
\newcommand{\epsilon}{\varepsilon}
\begin{document}
\mainmatter 

\title{Coresets for $(k, \ell)$-Median Clustering under the Fréchet Distance}

\author{Maike Buchin\textsuperscript{
\href{https://orcid.org/0000-0002-3446-4343}{[0000-0002-3446-4343]}} \and Dennis Rohde\textsuperscript{
\href{https://orcid.org/0000-0001-8984-1962}{[0000-0001-8984-1962]} \Letter}}
\institute{Faculty of Informatics, Ruhr-University Bochum\\ \email{maike.buchin@rub.de} \\ \email{dennis.rohde-t1b@rub.de}}

\authorrunning{M. Buchin and D. Rohde}
\maketitle

\begin{abstract}
    We present an algorithm for computing $\epsilon$-coresets for $(k, \ell)$-median clustering of polygonal curves in $\mathbb{R}^d$ under the Fr\'echet distance. This type of clustering is an adaption of Euclidean $k$-median clustering: we are given a set of $n$ polygonal curves in $\mathbb{R}^d$, each of complexity (number of vertices) at most $m$, and want to compute $k$ median curves such that the sum of distances from the given curves to their closest median curve is minimal. Additionally, we restrict the complexity of the median curves to be at most $\ell$ each, to suppress overfitting, a problem specific for sequential data. Our algorithm has running time linear in $n$, sub-quartic in $m$ and quadratic in $\epsilon^{-1}$. With high probability it returns $\epsilon$-coresets of size quadratic in $\epsilon^{-1}$ and logarithmic in $n$ and $m$. We achieve this result by applying the improved $\epsilon$-coreset framework by Langberg and Feldman to a generalized $k$-median problem over an arbitrary metric space. Later we combine this result with the recent result by Driemel et al. on the VC dimension of metric balls under the Fr\'echet distance. Furthermore, our framework yields $\epsilon$-coresets for any generalized $k$-median problem where the range space induced by the open metric balls of the underlying space has bounded VC dimension, which is of independent interest. Finally, we show that our $\epsilon$-coresets can be used to improve the running time of an existing approximation algorithm for $(1,\ell)$-median clustering.
    \keywords{clustering, coresets, median, polygonal curves}
\end{abstract}

\section{Introduction}

At the present time even efficient approximation algorithms are often incapable of handling massive data sets, which have become common. Here, we need efficient methods to reduce data while (approximately) maintaining the core properties of the data. A popular approach to this topic are $\epsilon$-coresets; see for example \cite{DBLP:journals/ki/MunteanuS18,DBLP:journals/corr/abs-2011-09384} for comprehensive surveys. An $\epsilon$-coreset is a small (weighted) set that aggregates certain properties of a given (massive) data set up to some small error. $\epsilon$-coresets are very popular in the field of clustering, \confer \cite{DBLP:conf/stoc/Har-PeledM04,DBLP:journals/dcg/Har-PeledK07,chen_coresets_2009,DBLP:conf/focs/HuangJLW18} and they are becoming a topic in other fields, too
, \confer \cite{DBLP:conf/nips/MunteanuSSW18,DBLP:journals/siamcomp/FeldmanSS20}.
The technique for computing an $\epsilon$-coreset for a given data set highly depends on the application at hand, but mostly $\epsilon$-coresets are computed by filtering the given data set.

While $\epsilon$-coresets can be computed efficiently for $k$-clustering of points in the Euclidean space, less is known for clustering of curves. In particular, only little effort (\confer \cite{coresets_eurocg}) has been made in designing methods for computing $\epsilon$-coresets for $(k,\ell)$-clustering of polygonal curves in $\mathbb{R}^d$ under the Fréchet distance. This type of clustering, which has recently drawn increasing popularity due to a growing number of applications, see \cite{DBLP:conf/soda/DriemelKS16,k_l_center,doi:10.1137/1.9781611976465.160,DBLP:conf/gis/BuchinDLN19}, is an adaption of the Euclidean $k$-clustering: we are given a set of polygonal curves and seek to compute $k$ center curves that minimize either the maximum Fr\'echet distance (center objective), or the sum of Fr\'echet distances (median objective), among the given curves and their closest center curve. In addition, we restrict the complexity---the number of vertices---of each center curve to be at most $\ell$ to suppress overfitting, a problem specific for sequential data. This means that input curves and center curves are in general of different complexities, which is the reason why we need a specialized algorithm for computing $\epsilon$-coresets and can not apply $\epsilon$-coreset algorithms for discrete metric spaces (\confer \cite{Feldman2011}) on the input.

The Fr\'echet distance is a natural dissimilarity measure for curves that is a pseudo-metric and can be computed efficiently \cite{alt_godau}. Unlike other measures for curves, like the dynamic time warping distance (or the discrete version of the Fr\'echet distance), it 
takes the whole course of the curves into account, not only the pairwise distances among their vertices. This can be particularly useful, e.g. when the input consists of irregularly sampled trajectories, \confer \cite{DBLP:conf/soda/DriemelKS16}. Unfortunately, since the Fr\'echet distance is a bottleneck distance measure, i.e., it boils down to a single distance between two points on the curves, it is sensitive to outliers, which may negatively affect its applications. In clustering, we can counteract by choosing an appropriate clustering objective and indeed, the $(k,\ell)$-median objective is a good choice, because the median is a robust measure of central tendency. However, the state of the art $(k,\ell)$-median clustering algorithms (\confer \cite{DBLP:conf/soda/DriemelKS16,doi:10.1137/1.9781611976465.160}) have exponential running time dependencies and cannot be used in practice, while the practical algorithms for $(k,\ell)$-clustering (\confer \cite{k_l_center,DBLP:conf/gis/BuchinDLN19}) rely on the $(k,\ell)$-center objective, which is not robust and therefore amplifies the sensitivity on outliers.

In this work, we present an algorithm for computing $\epsilon$-coresets for $(k,\ell)$-median clustering under the Fr\'echet distance and improve an $(1,\ell)$-median clustering algorithm by Buchin et al. \cite{arxiv_full}, using $\epsilon$-coresets and rendering it much more practical.

\subsection{Related Work}

$(k, \ell)$-clustering of polygonal curves was introduced by Driemel et al.~\cite{DBLP:conf/soda/DriemelKS16}. They developed the first approximation schemes for $(k,\ell)$-center and $(k,\ell)$-median clustering of polygonal curves in $\mathbb{R}$, which run in near-linear time. They proved that both problems are NP-hard, when $k$ is part of the input. Further, they showed that the doubling dimension of the space of polygonal curves under the Fr\'echet distance is unbounded, even when the curves are of bounded complexity. Subsequently, Buchin et al.~\cite{k_l_center} presented a constant factor approximation algorithm for $(k, \ell)$-center clustering of polygonal curves in $\mathbb{R}^d$, with running time linear in the number of given curves and polynomial in their maximum complexity. Also they showed that $(k,\ell)$-center clustering is NP-hard and NP-hard to approximate within a factor of $(1.5 - \epsilon)$ for curves in $\mathbb{R}$, respectively $(2.25 - \epsilon)$ for curves in $\mathbb{R}^d$ with $d \geq 2$, even for $k = 1$. Buchin et al.~\cite{DBLP:conf/gis/BuchinDLN19} provided practical algorithms for $(k,\ell)$ clustering under the Fr\'echet distance and thereby introduce a new technique, the so called Fr\'echet centering, for computing better cluster centers. Also, Meintrup et al.~\cite{DBLP:conf/nips/MeintrupMR19} provided a practical $(1+\epsilon)$-approximation algorithm for discrete $(k,\ell)$-median clustering under the presence of a certain number of outliers. 
Buchin et al.~\cite{DBLP:conf/swat/BuchinDS20} proved that $(k,\ell)$-median clustering is also NP-hard, even for $k = 1$. Furthermore, they presented polynomial-time approximation schemes for $(k,\ell)$-center and $(k, \ell)$-median clustering of polygonal curves under the discrete Fr\'echet distance. Nath and Taylor \cite{abhin2020kmedian} gave a near-linear time approximation scheme for $(k,\ell)$-median clustering of polygonal curves in $\mathbb{R}^d$ under the discrete Fr\'echet distance and a polynomial-time approximation scheme for $k$-median clustering of sets of points from $\mathbb{R}^d$ under the Hausdorff distance. Furthermore, they showed that $k$-median clustering of point sets under the Hausdorff distance is NP-hard (for constant $k$). 
Recently, Buchin et al.~\cite{doi:10.1137/1.9781611976465.160} developed an approximation scheme for $(k,\ell)$-median clustering under the Fr\'echet distance with running time linear in the number of curves and polynomial in their complexity, where the computed centers have complexity up to $2 \ell - 2$.

Langberg and Schulman~\cite{LangbergS10} developed a framework for computing relative error approximations of integrals over any function from a given family of unbounded and non-negative real functions. In particular, this framework can be used to compute $\epsilon$-coresets for $k$-clustering of points in $\mathbb{R}^d$ with objective functions based on sums of distances among the points and their closest center. The idea of their framework is to sub-sample the input with respect to a certain non-uniform probability distribution, which is computed using an approximate solution to the problem. More precisely, the approximate solution is used to compute an upper bound on the sensitivity of each data element. The sensitivity is the maximum fraction of cost that the element may cause for any possible solution. It is a notion of the data elements \textit{importance} for the problem and the probability distribution is set up such that each element has probability proportional to its importance. A sample of a certain size drawn from this distribution and properly weighted, is an $\epsilon$-coreset for the underlying clustering problem with high probability. Feldman and Langberg \cite{Feldman2011} developed a unified framework for approximate clustering, which is largely based on $\epsilon$-coresets. They combine the techniques by Langberg and Schulman \cite{LangbergS10} with $\epsilon$-approximations, which stem from the framework of range spaces and VC dimension developed in statistical learning theory. As a result, they address a spectrum of clustering problems, such as $k$-median clustering, $k$-line median clustering, projective clustering and also other problems like subspace approximation. Braverman et al. \cite{DBLP:journals/corr/BravermanFL16} improved the aforementioned framework by switching to $(\epsilon, \eta)$-approximations, which leads to substantially smaller sample sizes in many cases. Also, they simplified and further generalized the framework and applied it to $k$-means clustering of points in $\mathbb{R}^d$. Following, Feldman et al. \cite{DBLP:journals/siamcomp/FeldmanSS20} improved this framework by switching to another range space, thereby obtaining smaller coresets for $k$-means, $k$-line means and affine subspace clustering.

\subsection{Our Contributions}

In this work we develop an algorithm for computing $\epsilon$-coresets for $(k,\ell)$-median clustering of polygonal curves in $\mathbb{R}^d$ under the Fr\'echet distance (where in the following we assume $k,\ell$ and $d$ to be constant):

\begin{theorem}\label{theo:kl_coreset1}
    There exists an algorithm that, given a set of $n$ polygonal curves in $\mathbb{R}^d$ of complexity at most $m$ each and a parameter $\epsilon \in (0,1)$, returns with constant positive probability an $\epsilon$-coreset for $(k, \ell)$-median clustering under the Fr\'echet distance of size $O(k^2 \log^2(k) \epsilon^{-2} \log(m) \log(kn))$, in time $$O(nm \log(m) + nm^3 \log(m) + \epsilon^{-2}\log(m)\log(n))$$ for $k > 1$ and $$O(n m \log(m) + m^3 \log(m) + \epsilon^{-2}\log(m)\log(n))$$ for $k = 1$.
\end{theorem}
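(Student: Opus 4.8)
The plan is to cast $(k,\ell)$-median clustering of polygonal curves as a \emph{generalized $k$-median problem} over the (pseudo)metric space of polygonal curves under the Fr\'echet distance, and then to run the sensitivity-sampling coreset framework of Feldman and Langberg \cite{Feldman2011}, in the refined form due to Braverman et al., which needs only two problem-specific ingredients: a bound on the total sensitivity, obtained from a cheap bicriteria approximation, and a bound on the VC-type dimension of the range space induced by the candidate cost functions, for which we will invoke the result of Driemel et al.\ on the VC dimension of metric balls under the Fr\'echet distance.

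\emph{Abstract setup.} Let $\mathbb{X}$ be the polygonal curves in $\mathbb{R}^d$ with the Fr\'echet distance $d_F$; this is only a pseudometric, but the framework uses distance values only, so that is harmless (one may also pass to the quotient). Let the admissible solutions be the $k$-element subsets of curves of complexity at most $\ell$, and for such a $C$ put $\cost{x}{C} = \min_{c\in C} d_F(x,c)$. I would first isolate, as a stand-alone proposition --- the framework for an arbitrary generalized $k$-median problem promised in the introduction --- the statement that whenever (i) one can compute an $(\alpha,\beta)$-bicriteria solution, i.e.\ at most $\beta k$ centers of admissible complexity of clustering cost at most $\alpha$ times the optimum, and (ii) the range space whose ranges are the sublevel sets $\{x : \cost{x}{C} \le r\}$, over all admissible $C$ and all $r\ge 0$, has VC dimension $\nu$, then a sample of $O(\epsilon^{-2}\, t\,(\nu\log t + \log\delta^{-1}))$ input curves drawn independently with probabilities proportional to the standard sensitivity upper bounds read off from the bicriteria solution, each reweighted by the reciprocal of its sampling probability, is an $\epsilon$-coreset with probability $1-\delta$; here $t = O(\alpha\beta k)$ bounds the total sensitivity. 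Proving this reduces to the two routine facts that a sublevel set of $\cost{\cdot}{C}$ is a union of $k$ metric balls and that the bicriteria-based sensitivity upper bounds sum to $O(\alpha\beta k)$, and then to quoting the cited sampling theorems.

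\emph{Supplying the two ingredients.} For the dimension, a sublevel set of $\cost{\cdot}{C}$ is a union of $k$ Fr\'echet balls whose centers have complexity at most $\ell$, over the ground set of curves of complexity at most $m$; the Driemel et al.\ bound gives VC dimension $O(d^2\ell^2\log(d\ell m)) = O(\log m)$ for a single such ball when $d,\ell$ are constant, and the standard bound on the VC dimension of a $k$-fold union of a range space then gives $\nu = O(k\log k\log m)$. For the bicriteria solution I would split into two cases, which is what produces the two stated running times. For $k>1$, run a constant-factor $(k,\ell)$-median approximation that is linear in $n$ and polynomial in $m$ --- e.g.\ a constant-accuracy instantiation of the scheme of Buchin et al.\ \cite{doi:10.1137/1.9781611976465.160}, in time $O(nm^3\log m)$ --- giving $O(k)$ centers of complexity $O(\ell)$ and hence $t = O(k)$. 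For $k=1$ one cannot afford to touch each of the $n$ curves beyond a single distance evaluation against a low-complexity curve; here I would uniformly sample a constant number of input curves, replace each by a curve of complexity at most $\ell$ within a constant factor of its best Fr\'echet approximation by such a curve (computable in $O(m^3\log m)$ time), and take these $O(1)$ curves as the bicriteria center set. A Markov plus triangle-inequality argument shows that, with constant probability, the sample contains a curve that is simultaneously close to an optimal complexity-$\ell$ center and at small total distance to the whole input, so that its simplification --- and hence the set of all $O(1)$ simplified curves --- has clustering cost $O(\mathrm{OPT})$; this is exactly why the $k=1$ running time carries no factor $n$ on the cubic-in-$m$ term.

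\emph{Assembling the bounds.} Plugging $t = O(k)$ and $\nu = O(k\log k\log m)$ into the sample-size bound $O(\epsilon^{-2}\, t\,(\nu\log t + \log\delta^{-1}))$ and taking $\delta$ polynomially small in $kn$ (which in particular gives the claimed constant success probability) yields a coreset of size $O(k^2\log^2 k\,\epsilon^{-2}\log m\log(kn))$, the factor $\log t = O(\log k)$ accounting for the second power of $\log k$. Computing all sensitivity upper bounds costs one Fr\'echet-distance evaluation per input curve against an $O(\ell)$-complexity center, i.e.\ $O(nm\log m)$ time; drawing the sample costs $O(\epsilon^{-2}\log m\log n)$; and the bicriteria step dominates, giving the two claimed running times. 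The step I expect to be the main obstacle is pinning down the dimension $\nu$ of the cost-function range space: the balls in play have centers and ground-set elements of \emph{different} complexities ($\ell$ versus $m$), so one must invoke the Driemel et al.\ bound in precisely the right form and then control the $k$-fold union; a secondary subtlety is verifying that the sensitivity upper bounds really sum to $O(k)$, which needs the per-cluster centers of the bicriteria solution to approximate the best complexity-$\ell$ median of each cluster rather than merely its best complexity-$m$ center.
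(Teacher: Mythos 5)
Your overall route matches the paper's: abstract the problem to generalized $k$-median over a metric space, get a constant-factor bicriteria solution to control the sensitivities (with essentially the same case split at $k=1$ vs.\ $k>1$), bound the VC dimension of a range space via the $k$-fold union theorem over the Driemel et al.\ ball-dimension bound $O(d^2\ell^2\log(d\ell m))$, and sample with those sensitivities. The two bicriteria subroutines you sketch correspond to what the paper uses (Chen's algorithm on simplifications for $k>1$; a sampling-plus-simplification $34$-approximation for $k=1$).

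The genuine gap is in how you account for the $\log(kn)$ factor in the coreset size. You invoke a generic sensitivity-sampling bound of the form $O\bigl(\epsilon^{-2}\,t\,(\nu\log t+\log\delta^{-1})\bigr)$ and take $\nu$ to be the VC dimension of the \emph{unweighted} sublevel-set range space, i.e.\ unions of $k$ Fr\'echet balls, for which $\nu=O(k\log k\log m)$. But the sensitivity-sampling reduction to uniform sampling does not act on the functions $f_i(\cdot)=\min_c d_F(\tau_i,c)$ themselves: it acts on a multiset of \emph{rescaled} copies $g=\tfrac{1}{|F|\lambda(f)}f$, and the range space whose dimension governs the $(\eta,\epsilon)$-approximation is the one over those scaled functions. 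When the scaling factor $\lambda(f)$ is not uniform, the ranges $\{g: g(C)\ge r\}$ are no longer unions of metric balls of a common radius, so one cannot directly inherit the bound $O(k\log k\log m)$; the paper explicitly points out that this is why it cannot use Theorem~31 of Feldman, Schmidt and Sohler and must instead prove a separate bound (their Lemma~\ref{lem:vc_g}). There, the number $t$ of distinct scaling classes is shown to be $O(\log(\alpha k'n))$ using the sensitivity lower/upper bounds and a slicing argument \`a la Munteanu et al., and the dimension of the scaled range space then picks up this extra $\log(\alpha k'n)$ factor --- which is precisely where the $\log(kn)$ in the theorem's size bound comes from. Your fallback of ``taking $\delta$ polynomially small in $kn$'' both contradicts the theorem's ``constant positive probability'' (which corresponds to $\delta$ a constant) and misattributes the logarithmic factor: in the paper it is present even at constant $\delta$, driven by the scaling-class count, not by the failure probability. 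To close the gap you would need to either (a) show the scaling factors in your reweighted multiset fall into $O(\log(kn))$ classes and fold this into the range-space dimension, as the paper does, or (b) exhibit a variant of the sampling theorem whose dimension parameter genuinely is that of the unscaled sublevel sets, which the paper argues is not available for this range space.
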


Also we show that $\epsilon$-coresets can be used to improve the running time of an existing $(1,\ell)$-median $(5+\epsilon)$-approximation algorithm \cite{arxiv_full}, thereby facilitating its application in practice. 

We start by defining \emph{generalized $k$-median clustering}, where input and centers come from a subset (not necessarily the same) of an underlying metric space, each, and then derive our $\epsilon$-coreset result in this setting. This notion captures $(k,\ell)$-median clustering under the Fr\'echet distance in particular, but the analysis holds for any metric space. In doing so, we first give a universal bound on the so called sensitivity of the elements of the given data set and their total sensitivity, i.e., the sum of their sensitivities. The sensitivities are a measure of the data elements importance, i.e., the maximum fraction of the cost an element might cause for any center set, and later they determine the sample probabilities. Our analysis is based on the analysis of Langberg and Schulman \cite{LangbergS10}. 

Next, we apply the improved $\epsilon$-coreset framework by Feldman and Langberg \cite{Feldman2011}. Here, our analysis is based on the analysis of Feldman et al. \cite{DBLP:journals/siamcomp/FeldmanSS20}, but our sample size depends on the VC-dimension of the range space induced by the open metric balls. The open metric balls form a basis of the metric topology, hence it is more natural to study the VC dimension of their associated range space in a geometric setting. Indeed, for the $\ell_p^d$ spaces these range spaces have already been studied \cite[Theorem 2.2]{DBLP:journals/ml/GoldbergJ95} and recently, results for the (continuous and discrete) Fr\'echet, weak Fr\'echet and Hausdorff distance were obtained \cite{Driemel19}, enabling our main result. Finally, we show how an existing $(1,\ell)$-median $(5+\epsilon)$-approximation algorithm \cite{arxiv_full} can be improved by means of our $\epsilon$-coresets. 

\begin{theorem}\label{theo:approx2}
    There exists an algorithm that, given a set $T$ of $n$ polygonal curves of complexity at most $m$ each, and a parameter $\epsilon \in (0, 1/2]$, computes a polygonal curve $c$ of complexity $2 \ell - 2$, such that with constant positive probability, it holds that
    \begin{align*}
        \cost{T}{\{c\}} = \sum_{\tau \in T} d_F(\tau, c) \leq (5+\epsilon) \sum_{\tau \in T} d_F(\tau, c^\ast) = (5+\epsilon) \cost{T}{\{c^\ast\}},
    \end{align*}
    where $c^\ast$ is an optimal $(1, \ell)$-median for $T$ under the Fr\'echet distance. The algorithm has running time $$O\left(n m \log(m) + m^2 \log(m) + m^{2\ell-1}\epsilon^{-2\ell d + 2d - 2}\log^2(m)\log(n)\right).$$
\end{theorem}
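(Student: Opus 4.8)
The plan is to run the existing $(5+\epsilon)$-approximation of Buchin et al.~\cite{arxiv_full} not on the full input $T$ but on a small $\epsilon$-coreset, and to show that the two relative errors compose to at most $5+\epsilon$. Fix an absolute constant $c_0$ (e.g.\ $c_0 = 23$) and set $\epsilon_1 = \epsilon_2 = \epsilon/c_0$, so that both lie in $(0,1/2]$. As a first step I would apply the coreset construction underlying \cref{theo:kl_coreset1} with $k = 1$, but instantiated for the generalized $1$-median problem whose center class $Y$ consists of all polygonal curves in $\mathbb{R}^d$ of complexity at most $2\ell - 2$ rather than at most $\ell$. By \cite{Driemel19} the range space of open metric balls $\ball{c}{r}$, $c \in Y$, $r \ge 0$, over curves of complexity at most $m$ still has VC dimension $O(\log m)$ (the dependence on $\ell$ and $d$ being absorbed into the constant), so the construction returns, with constant positive probability, a weighted set $C$ of size $O(\epsilon_1^{-2}\log(m)\log(n)) = O(\epsilon^{-2}\log(m)\log(n))$ that is an $\epsilon_1$-coreset for $(1, 2\ell-2)$-median clustering; since $\ell \le 2\ell - 2$, it is in particular an $\epsilon_1$-coreset for $(1,\ell)$-median clustering. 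This takes time $O(nm\log(m) + m^2\log(m) + \epsilon^{-2}\log(m)\log(n))$, where the $nm\log(m)$, $m^2\log(m)$, and final terms account for, respectively, evaluating the Fréchet distance from each input curve to the seed center, computing the $O(1)$-approximate seed center that bounds the sensitivities (for $k=1$ a cheaper seed than in the general construction of \cref{theo:kl_coreset1} suffices), and drawing the sample.

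Next I would run the $(1,\ell)$-median algorithm of \cite{arxiv_full} with error parameter $\epsilon_2$ on the weighted instance $C$; it extends verbatim to weighted inputs because its objective is a weighted sum of Fréchet distances. It returns a polygonal curve $c$ of complexity $2\ell - 2$ with $\cost{C}{\{c\}} \le (5+\epsilon_2)\,\cost{C}{\{c^\ast_C\}}$, where $c^\ast_C$ is an optimal $(1,\ell)$-median for $C$. Because $C$ contains $N = O(\epsilon^{-2}\log(m)\log(n))$ curves of complexity at most $m$ and $\epsilon_2 = \Theta(\epsilon)$, substituting $N$ for the number of input curves in the running time of \cite{arxiv_full} contributes $O\!\left(m^{2\ell-1}\epsilon^{-2\ell d + 2d - 2}\log^2(m)\log(n)\right)$, which, added to the coreset cost, gives the bound claimed in the theorem.

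For the approximation guarantee let $c^\ast$ be an optimal $(1,\ell)$-median for $T$; note $c^\ast$ has complexity at most $\ell \le 2\ell - 2$. Since $c$ has complexity $2\ell - 2$ and $C$ is an $\epsilon_1$-coreset for centers of that complexity, the coreset lower bound gives $\cost{T}{\{c\}} \le (1-\epsilon_1)^{-1}\cost{C}{\{c\}}$. Combining this with the guarantee of \cite{arxiv_full}, optimality of $c^\ast_C$ (using that $c^\ast$ is feasible for the $(1,\ell)$-median of $C$), and finally the coreset upper bound applied to $c^\ast$, we obtain
\begin{align*}
\cost{T}{\{c\}} &\le \frac{1}{1-\epsilon_1}\,\cost{C}{\{c\}} \le \frac{5+\epsilon_2}{1-\epsilon_1}\,\cost{C}{\{c^\ast_C\}} \le \frac{5+\epsilon_2}{1-\epsilon_1}\,\cost{C}{\{c^\ast\}} \\
&\le \frac{(5+\epsilon_2)(1+\epsilon_1)}{1-\epsilon_1}\,\cost{T}{\{c^\ast\}} = \bigl(5 + O(\epsilon_1+\epsilon_2)\bigr)\,\cost{T}{\{c^\ast\}} \le (5+\epsilon)\,\cost{T}{\{c^\ast\}},
\end{align*}
the last step holding for $c_0$ large enough since $\epsilon \le 1/2$. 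Both the coreset construction and the algorithm of \cite{arxiv_full} succeed with constant positive probability, so run independently they jointly succeed with constant positive probability.

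I expect the first step to be the crux: one has to verify that the coreset machinery of \cref{theo:kl_coreset1} really yields a guarantee simultaneously for all candidate centers of complexity $2\ell - 2$ (not merely $\ell$) by instantiating the generalized $k$-median setting with the enlarged center class, and that doing so leaves the VC-dimension bound of \cite{Driemel19}, and hence the coreset size and the running time, unchanged up to constant factors. Adapting \cite{arxiv_full} to weighted inputs, composing the two relative errors, and the probability bookkeeping are then routine.
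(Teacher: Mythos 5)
Your high‑level plan --- build an $\epsilon$-coreset and then run the $(5+\epsilon)$-approximation of \cite{arxiv_full} entirely on the weighted coreset, composing the two relative errors --- is a genuinely different route from the paper's. The paper (Algorithm~\ref{algo:approx}) does \emph{not} run the base algorithm on the coreset: it still draws the samples $S$ and $W$ \emph{uniformly from the original, unweighted set $T$} and uses them, via Markov's inequality and Indyk's median estimation (\cref{theo:indyk_median}), to find a near-median input curve; the coreset is used only twice, both times for \emph{cost evaluation}: to estimate $\cost{T}{\{\hat c\}}$ (which fixes the grid resolution) and to score the $O(m^{2\ell-2}\epsilon^{-(2\ell-2)d})$ enumerated candidates in the last line. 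Keeping the sampling on $T$ is not incidental --- it is what lets the paper apply \cref{theo:indyk_median} as stated.

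That is exactly where your argument has a gap. You assert that the algorithm of \cite{arxiv_full} ``extends verbatim to weighted inputs because its objective is a weighted sum of Fréchet distances.'' The objective being a weighted sum does not make the algorithm weight-agnostic: the algorithm's inner loop samples curves \emph{uniformly} from its input and relies (i) on Markov's inequality with uniform expectation $\frac{1}{n}\sum_i d_F(\tau_i,c^\ast)$ to certify the existence of a near-median curve among the samples, and (ii) on Indyk's Theorem~31 (\cref{theo:indyk_median}), a concentration bound that is proved for uniform samples with replacement from an \emph{unweighted} ground set. If the input is a non-uniformly weighted coreset $C$, you must either expand it into a multiset (potentially re-introducing a factor of $n$), or sample proportionally to the weights and re-derive both (i) and (ii) for importance-weighted estimators. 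The second option is plausible but is a real lemma to prove, not a verbatim application --- the variance of the weighted estimator and the tail bound both change, and the paper never provides a weighted analogue of \cref{theo:indyk_median}. Dismissing this as routine is precisely the kind of step that needs to be checked, and the paper's construction is structured so as to avoid having to check it. Your composition-of-errors chain and the observation that the coreset must be built for the center class $\mathbb{X}^d_{2\ell-2}$ (not $\mathbb{X}^d_\ell$) are both correct and match the paper; the missing piece is justifying the weighted invocation of the base algorithm, or restructuring the algorithm as the paper does so that sampling always acts on the unweighted $T$.
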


\cref{theo:kl_coreset1,theo:approx2} will follow from \cref{theo:kl_coreset,theo:approx}, respectively. Note that although we do not present algorithms for computing $\epsilon$-coresets for the weak and discrete Fr\'echet and Hausdorff distance, our results also imply the existence of $\epsilon$-coresets of similar size for these metrics.

\subsection{Organization}

In \cref{sec:coresets_metric} we give the results for general metric spaces: we derive a universal bound on the sensitivities in \cref{subsec:sensitivity} and the $\epsilon$-coreset result in \cref{subsec:sensitivity_sampling}. In \cref{sec:coresets_frechet} we present the algorithm for computing $\epsilon$-coresets for $(k,\ell)$-median clustering. Finally, in \cref{sec:approximation_improvement} we demonstrate the use of $\epsilon$-coresets in an existing $(5+\epsilon)$-approximation algorithm for $(1,\ell)$-median clustering. 

\section{Coresets for Generalized $k$-Median Clustering in Metric Spaces}
\label{sec:coresets_metric}
In this section, we first derive general results for $\epsilon$-coreset based on the sensitivity sampling framework \cite{LangbergS10,Feldman2011}. In the following $d \in \mathbb{N}$ is an arbitrary constant. By $\lVert \cdot \rVert$ we denote the Euclidean norm and for $n \in \mathbb{N}$, we define $[n] = \{1, \dots, n\}$. For a closed logical formula $\Psi$ we define by $\mathbbm{1}(\Psi)$ the function that is $1$ if $\Psi$ is true and $0$ otherwise.

Let $\mathcal{X} = (X,\rho)$ be an arbitrary metric space, where $X$ is any non-empty set and $\rho \colon X \times X \rightarrow \mathbb{R}_{\geq 0}$ is a distance function. We introduce a generalized definition of $k$-median clustering, where the input is restricted to come from a predefined subset $Y \subseteq X$ and the medians are restricted to come from a predefined subset $Z \subseteq X$.

\begin{definition}
    \label{def:generalized_k_median}
    The generalized $k$-median clustering problem is defined as follows, where $k \in \mathbb{N}$ is a fixed (constant) parameter of the problem: given a finite and non-empty set $T = \{ \tau_1, \dots, \tau_n \} \subseteq Y$, compute a set $C$ of $k$ elements from $Z$, such that $\cost{T}{C} = \sum_{\tau \in T} \min_{c \in C} \rho(\tau,c)$ is minimal.
\end{definition}

We analyze the problem in terms of functions. This allows us to apply the improved $\epsilon$-coreset framework by Feldman and Langberg \cite{Feldman2011}. Therefore, given a set $T = \{ \tau_1, \dots, \tau_n \} \subseteq Y$ we define $F = \{ f_1, \dots, f_n \}$ to be a set of functions with $f_i \colon 2^{Z} \setminus \{ \emptyset \} \rightarrow \mathbb{R}_{\geq 0},\ C \mapsto \min_{c \in C} \rho(c, \tau_i)$. For each $C \in 2^Z \setminus \{ \emptyset \}$ we now have $\cost{T}{C} = \sum_{i=1}^n f_i(C)$.

In the following, we bound the sensitivity of each $\tau \in T$. That is the maximum fraction of $\cost{T}{C}$ that is caused by $\tau$, for all $C$. To comply with the $k$-median problem we only take into account the $k$-subsets $C \subseteq Z$.

\subsection{Sensitivity Bound}
\label{subsec:sensitivity}

First, we formally define the sensitivities of the inputs $\tau \in T$ in terms of the respective functions. 

\begin{definition}[\cite{Feldman2011}]
    \label{def:sensitivity}
    Let $F$ be a finite and non-empty set of functions $f \colon 2^Z \setminus \{ \emptyset \} \rightarrow \mathbb{R}_{\geq 0}$. For $f \in F$ we define the sensitivity with respect to $F$: \[ \mathfrak{s}(f,F) = \sup_{\substack{C = \{c_1, \dots, c_k\} \subseteq Z \\ \sum\limits_{g \in F} g(C) > 0}} \ \frac{f(C)}{\sum\limits_{g \in F} g(C)}. \] 
    
    We define the total sensitivity of $F$ as $\mathfrak{S}(F) = \sum_{f \in F} \mathfrak{s}(f,F)$.
\end{definition}

We now prove a bound on the sensitivity of all $f \in F$, which then yields a bound on the total sensitivity of $F$. Later, our coreset will be a weighted sample from a distribution whose probabilities are determined by the derived bounds. To compute the bounds, any (bi-criteria) approximate solution to the generalized $k$-median problem can be used. Our analysis is an adaption of the analysis of the sensitivities for sum-based $k$-clustering of points in $\mathbb{R}^d$ by Langberg and Schulman \cite{LangbergS10}. We note that similar bounds have already been derived in the literature, see e.g., \cite{DBLP:conf/fsttcs/VaradarajanX12}.

\begin{lemma}
    \label{lem:sensitivities}
    Let $k^\prime \in \mathbb{N}$, $C^\ast = \{ c^\ast_1, \dots, c^\ast_k \} \subseteq Z$ with $\Delta^\ast = \sum_{i=1}^n f_i(C^\ast)$ minimal and $\hat{C} = \{\hat{c}_1, \dots, \hat{c}_{k^\prime}\} \subseteq X$ with $\hat{\Delta} = \sum_{i=1}^n f(\hat{C}) \leq \alpha \cdot \Delta^\ast$ for an $\alpha \in [1, \infty)$. Breaking ties arbitrarily, we assume that every $\tau \in T$ has a unique nearest neighbor in $\hat{C}$ and for $i \in [k^\prime]$, we define $\hat{V}_i = \{ \tau \in T \mid \forall j \in [k^\prime]: \rho(\tau, \hat{c}_i) \leq \rho(\tau, \hat{c}_j) \}$ to be the Voronoi cell of $\hat{c}_i$ and $\hat{\Delta}_i = \sum_{\tau \in \hat{V}_i} \rho(\tau, \hat{c}_i)$ to be its cost. For each $i \in [k^\prime]$ and $\tau_j \in \hat{V}_i$ it holds that $$ \gamma(f_j) = \left(1 + \sqrt{\frac{2k^\prime}{3\alpha}} \right) \left( \frac{\alpha \rho(\tau_j, \hat{c}_i)}{\hat{\Delta}} + \frac{2 \alpha \hat{\Delta}_i}{\hat{\Delta}\lvert \hat{V}_i \rvert} \right) + \left(1 + \sqrt{\frac{3\alpha}{2k^\prime}}\right) \frac{2}{\lvert \hat{V}_i \rvert} \geq \mathfrak{s}(f_j, F)$$ and $ \Gamma = \sum_{f \in F} \gamma(f) = 2k^\prime + 2\sqrt{6\alpha k^\prime} + 3 \alpha \geq \mathfrak{S}(F)$.
\end{lemma}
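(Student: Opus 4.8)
The plan is to follow the classical sensitivity analysis of Langberg and Schulman, adapted to the metric (rather than Euclidean) setting and to the bi-criteria approximation $\hat{C}$. Fix $i \in [k']$ and $\tau_j \in \hat{V}_i$, and let $C = \{c_1, \dots, c_k\} \subseteq Z$ be an arbitrary candidate center set with $\sum_{g \in F} g(C) > 0$; write $\Delta_C = \sum_{g \in F} g(C) = \cost{T}{C}$. We must bound $f_j(C)/\Delta_C = \min_{c \in C}\rho(\tau_j,c)/\Delta_C$. Let $\sigma \colon T \to [k']$ assign each $\tau \in T$ to (the index of) its nearest center in $\hat{C}$, so that $\hat{\Delta}_i = \sum_{\tau \in \hat{V}_i}\rho(\tau,\hat{c}_i)$ and $\sum_i \hat{\Delta}_i = \hat{\Delta} \le \alpha\Delta^\ast \le \alpha\Delta_C$. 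The core idea is to route the cost of $\tau_j$ under $C$ through $\hat{c}_i$ and through the other curves in $\hat{V}_i$: by the triangle inequality, for any $\tau_\ell \in \hat{V}_i$ and any $c \in C$,
\begin{align*}
    \rho(\tau_j, c) \le \rho(\tau_j, \hat{c}_i) + \rho(\hat{c}_i, \tau_\ell) + \rho(\tau_\ell, c),
\end{align*}
and minimizing over $c \in C$ gives $f_j(C) \le \rho(\tau_j,\hat{c}_i) + \rho(\hat{c}_i,\tau_\ell) + f_\ell(C)$. Averaging this inequality over all $\tau_\ell \in \hat{V}_i$ yields
\begin{align*}
    f_j(C) \le \rho(\tau_j,\hat{c}_i) + \frac{\hat{\Delta}_i}{|\hat{V}_i|} + \frac{1}{|\hat{V}_i|}\sum_{\tau_\ell \in \hat{V}_i} f_\ell(C).
\end{align*}

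Next I would split the analysis into two regimes according to whether the ``local'' cost $\sum_{\tau_\ell \in \hat{V}_i} f_\ell(C)$ is large or small relative to the global cost $\Delta_C$ — this is where the two summands with the $\sqrt{2k'/3\alpha}$ and $\sqrt{3\alpha/2k'}$ factors come from, and choosing the threshold optimally is the step that produces those particular constants. In the regime where $\sum_{\tau_\ell \in \hat{V}_i} f_\ell(C) \le \beta\,\Delta_C$ for a threshold $\beta$ to be fixed, the last term above is at most $\beta\Delta_C/|\hat{V}_i|$, and dividing through by $\Delta_C$ together with $\rho(\tau_j,\hat{c}_i) \le \hat\Delta$, $\hat\Delta_i \le \hat\Delta \le \alpha\Delta_C$ bounds $f_j(C)/\Delta_C$ by $\alpha\rho(\tau_j,\hat c_i)/\hat\Delta + \alpha\hat\Delta_i/(\hat\Delta|\hat V_i|) + \beta/|\hat V_i|$. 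In the complementary regime, one instead uses that a single Voronoi cell's local cost being more than $\beta\Delta_C$ can only happen for few cells, so summing the bound over all such cells is controlled; here $\rho(\tau_j,\hat c_i) \le \rho(\tau_j,c) + \rho(c,\hat c_i)$ applied to the center $c$ realizing $f_j(C)$, combined with the fact that $\hat c_i$ lies in cell $i$, feeds back into the global cost. Balancing $\beta$ against $\sqrt{k'/\alpha}$ across the two cases is what yields $\gamma(f_j)$ as stated, with the factor-$2$ terms arising from the averaging step above.

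For the total sensitivity bound I would sum $\gamma(f_j)$ over all $j$, grouping by Voronoi cell. Summing the first bracket: $\sum_i \sum_{\tau_j \in \hat V_i}\big(\alpha\rho(\tau_j,\hat c_i)/\hat\Delta + 2\alpha\hat\Delta_i/(\hat\Delta|\hat V_i|)\big) = \sum_i \big(\alpha\hat\Delta_i/\hat\Delta + 2\alpha\hat\Delta_i/\hat\Delta\big) = 3\alpha$, using $\sum_{\tau_j\in\hat V_i}\rho(\tau_j,\hat c_i) = \hat\Delta_i$ and $\sum_i\hat\Delta_i = \hat\Delta$; the prefactor $(1+\sqrt{2k'/3\alpha})$ turns $3\alpha$ into $3\alpha + \sqrt{6\alpha k'}$. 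Summing the second term: $\sum_i\sum_{\tau_j\in\hat V_i} 2/|\hat V_i| = 2k'$, and the prefactor $(1+\sqrt{3\alpha/2k'})$ turns this into $2k' + \sqrt{6\alpha k'}$. Adding the two gives $\Gamma = 2k' + 2\sqrt{6\alpha k'} + 3\alpha$. The main obstacle I anticipate is the second regime: carefully arguing that cells whose local cost exceeds the threshold contribute a bounded amount in aggregate, and threading that argument so that the constants line up exactly with the two prefactors — a delicate bookkeeping step rather than a conceptual one, but the place where an off-by-constant error is most likely to creep in.
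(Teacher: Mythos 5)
Your opening averaging step is correct and, in fact, already finishes the proof --- no case split is needed, and the bound it gives is strictly stronger than $\gamma(f_j)$. From
\[
f_j(C) \;\le\; \rho(\tau_j,\hat{c}_i) + \frac{\hat{\Delta}_i}{\lvert \hat{V}_i\rvert} + \frac{1}{\lvert\hat{V}_i\rvert}\sum_{\tau_\ell\in\hat{V}_i}f_\ell(C),
\]
divide by $\sum_{g\in F}g(C)$ and use only two facts: $\sum_{g\in F}g(C)\ge\Delta^\ast\ge\hat{\Delta}/\alpha$ for the first two terms, and $\sum_{\tau_\ell\in\hat{V}_i}f_\ell(C)\le\sum_{g\in F}g(C)$ for the last. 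This yields $\mathfrak{s}(f_j,F)\le \alpha\rho(\tau_j,\hat{c}_i)/\hat{\Delta}+\alpha\hat{\Delta}_i/(\hat{\Delta}\lvert\hat{V}_i\rvert)+1/\lvert\hat{V}_i\rvert$, which is termwise at most $\gamma(f_j)$ (each prefactor in $\gamma$ is at least $1$ and the middle and last terms carry an extra factor of $2$), so the per-element claim follows immediately; summing gives total sensitivity at most $2\alpha+k'$, which is below $\Gamma$. The paper, by contrast, does not average over the cell at all: it lower-bounds the \emph{denominator} $\sum_{g\in F}g(C)$ via a $\beta$-weighted combination of a local bound over the dense half $\hat{B}_i\subseteq\hat{V}_i$ and the global bound $\hat{\Delta}/\alpha$, then maximizes a rational function in $\rho(\hat{c}_i,c_j)$ and finally optimizes $\beta$, and it is precisely this $\beta$-optimization that produces the $\sqrt{2k'/3\alpha}$ and $\sqrt{3\alpha/2k'}$ factors. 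Those constants are an artifact of that route; you were not obligated to reproduce them, only to stay below them.

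The regime split you sketch to hit $\gamma(f_j)$ exactly does not hold up as stated. Sensitivity is a supremum over all $k$-subsets $C$, so the bound must hold uniformly in $C$; the remark that ``a single Voronoi cell's local cost being more than $\beta\Delta_C$ can only happen for few cells, so summing over all such cells is controlled'' is an argument about an aggregate over $i$ for a fixed $C$, which is not what is being bounded. Also, ``$\hat{c}_i$ lies in cell $i$'' need not hold: $\hat{C}\subseteq X$ is not required to lie in $T$. The clean route is the one you started with --- average over the whole cell, bound the denominator globally, and then simply observe that the resulting expression is dominated by the paper's $\gamma(f_j)$.
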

\begin{proof}
    We assume that $\hat{\Delta} > 0$. By assumption $\hat{V}_1, \dots, \hat{V}_{k^\prime}$ form a partition of $T$ and by definition $\hat{\Delta} = \sum_{i=1}^{k^\prime} \hat{\Delta}_i$ as well as $\sum_{f \in F} f(C) \geq \hat{\Delta}/\alpha$ for each $C = \{ c_1, \dots, c_k\} \subseteq Z$. For $i \in [k^\prime]$, we let $\hat{B}_i = \{ \tau \in \hat{V}_i \mid \rho(\tau, \hat{c}_i) \leq 2\hat{\Delta}_i/\lvert \hat{V}_i \rvert\}$. It holds that $\lvert \hat{B}_i \rvert \geq \lvert \hat{V}_i \rvert/2$, since otherwise $\sum_{\tau \in \hat{V}_i \setminus \hat{B}_i} \rho(\tau, \hat{c}_i) > \hat{\Delta}_i$, which is a contradiction. By the triangle-inequality, we have for each $\{ c_1, \dots, c_k \} \subseteq Z$, $i \in [k^\prime]$, $j \in [k]$ and $\tau \in T$: $$ \rho(\hat{c}_i, c_j) \leq \rho(\hat{c}_i, \tau) + \rho(\tau, c_j) \iff \rho(\tau, c_j) \geq \rho(\hat{c}_i, c_j) - \rho(\hat{c}_i, \tau).$$ Furthermore, since $\rho$ is non-negative: $ \rho(\tau, c_j) \geq \max\{ 0, \rho(\hat{c}_i, c_j) - \rho(\hat{c}_i, \tau) \}$.
    
    For each $C = \{c_1, \dots, c_k\} \subseteq Z$, $i \in [k^\prime]$ and $\beta \in [0,1]$ we now have the following bound: 
    \begin{align*}
        \sum_{f \in F} f(C) & \geq \beta \sum_{\tau \in \hat{B}_i} \min_{j \in [k]} \rho(\tau, c_j) + (1-\beta) \frac{\hat{\Delta}}{\alpha} \\
        & \geq \beta \max\left\{0, \rho(\hat{c}_i, c_j) - \frac{2\hat{\Delta}_i}{\lvert \hat{V}_i \rvert} \right\} \frac{\lvert \hat{V}_i \rvert}{2} + (1 - \beta) \frac{\hat{\Delta}}{\alpha},    
    \end{align*}
    where among $\{c_1, \dots, c_k\}$, $c_j$ is closest to $\hat{c}_i$. Using the triangle-inequality and the above bound yields for each $\beta \in [0,1)$, $i \in [k^\prime]$ and $\tau_m \in \hat{V}_i$:
    \begin{align*}
        \mathfrak{s}(f_m, F) & \leq \sup_{\{c_1, \dots, c_k\} \subseteq Z} \ \frac{\rho(\tau_m, \hat{c}_i) + \rho(\hat{c}_i, c_j)}{\beta \max\left\{0, \rho(\hat{c}_i, c_j) - \frac{2\hat{\Delta}_i}{\lvert \hat{V}_i \rvert} \right\} \frac{\lvert \hat{V}_i \rvert}{2} + (1 - \beta) \frac{\hat{\Delta}}{\alpha}} \\
        & \leq \sup_{\substack{\{c_1, \dots, c_k\} \subseteq Z \\ \rho(\hat{c}_i, c_j) \geq 2\hat{\Delta}_i/\lvert \hat{V}_i \rvert}} \ \frac{\rho(\tau_m, \hat{c}_i) + \rho(\hat{c}_i, c_j)}{\beta \left(\frac{\lvert \hat{V}_i \rvert \rho(\hat{c}_i, c_j)}{2} - \hat{\Delta}_i\right) + (1 - \beta) \frac{\hat{\Delta}}{\alpha}}
    \end{align*}
    Here, again $c_j$ is closest to $\hat{c}_i$ among $\{c_1, \dots, c_k\}$ and the last inequality follows because it can be observed that the term takes smaller values for $\rho(\hat{c}_i, c_j) < 2 \hat{\Delta}_i/\lvert \hat{V}_i \rvert$ than for $\rho(\hat{c}_i, c_j) \geq 2 \hat{\Delta}_i/\lvert \hat{V}_i \rvert$, independent of $\beta$. Now, to obtain a bound that is independent of $c_j$, we substitute $\rho(\hat{c}_i, c_j)$ by a free variable $x$ and let $$ h \colon [2\hat{\Delta}_i/\lvert \hat{V}_i \rvert, \infty) \rightarrow \mathbb{R}_{\geq 0},\ x \mapsto \frac{\rho(\tau_m, \hat{c}_i) + x}{\beta \left(\frac{\lvert \hat{V}_i \rvert x}{2} - \hat{\Delta}_i \right) + (1 - \beta) \frac{\hat{\Delta}}{\alpha}}.$$ The derivative of $h$ is $$ \frac{(1-\beta)\frac{\hat{\Delta}}{\alpha} - \beta \left(\frac{\lvert \hat{V}_i \rvert \rho(\tau_m, \hat{c}_i)}{2}+\hat{\Delta}_i\right)}{\left(\beta \left(\frac{\lvert \hat{V}_i \rvert x}{2} - \hat{\Delta}_i \right) + (1 - \beta) \frac{\hat{\Delta}}{\alpha}\right)^2} $$ and it can be observed that the sign of this function is independent of $x$. Therefore, $h$ is a monotonic function and is thus either maximized at $x = 2\hat{\Delta}_i/\lvert \hat{V}_i \rvert$ or when $x \rightarrow \infty$. Using l'Hôspital's rule we obtain
    \begin{align*}
        \mathfrak{s}(f_m, F) & \leq \max \left\{\frac{\rho(\tau_m, \hat{c}_i) + \frac{2\hat{\Delta}_i}{\lvert \hat{V}_i \rvert}}{(1-\beta)\frac{\hat{\Delta}}{\alpha}}, \frac{1}{\beta \frac{\lvert \hat{V}_i \rvert}{2}} \right\} \leq \frac{\alpha \rho(\tau_m, \hat{c}_i)}{(1-\beta) \hat{\Delta}} + \frac{2 \alpha \hat{\Delta}_i}{(1-\beta) \hat{\Delta} \lvert \hat{V}_i \rvert} + \frac{2}{\beta \lvert \hat{V}_i \rvert}.
    \end{align*}
    Therefore,
    \begin{align*}
        \mathfrak{S}(F) & \leq \sum_{i=1}^{k^\prime} \sum_{\tau_m \in \hat{V}_i} \left( \frac{\alpha \rho(\tau_m, \hat{c}_i)}{(1-\beta) \hat{\Delta}} + \frac{2 \alpha \hat{\Delta}_i}{(1-\beta) \hat{\Delta} \lvert \hat{V}_i \rvert} + \frac{2}{\beta \lvert \hat{V}_i \rvert} \right) = \frac{3 \alpha}{1-\beta} + \frac{2k^\prime}{\beta}.
    \end{align*}
    By simple calculus, this bound is minimized at $\beta = \frac{1}{1+ \sqrt{\frac{3 \alpha}{2 k^\prime}}} < 1$.
\qed\end{proof}

\subsection{Coresets by Sensitivity Sampling}
\label{subsec:sensitivity_sampling}

We apply the framework of Feldman and Langberg \cite{Feldman2011}. First, we formally define $\epsilon$-coresets for generalized $k$-median clustering.

\begin{definition}
    \label{def:coreset}
    Given $\epsilon \in (0,1)$ and a finite non-empty set $T \subseteq Y$, a (multi-)set $S \subseteq X$ together with a weight function $w \colon S \rightarrow \mathbb{R}_{> 0}$ is a weighted $\epsilon$-coreset for $k$-median clustering of $T$, if for all $C \subseteq Z$ with $\lvert C \rvert = k$ it holds that $$(1-\epsilon) \cost{T}{C} \leq \pcost_w\parens*{S,C} \leq (1+\epsilon) \cost{T}{C},$$ where $\pcost_w\parens*{S,C} = \sum_{s \in S} w(s) \cdot \min_{c \in C} \rho(s, c)$. 
\end{definition}

We define range spaces and the associated concepts.

\begin{definition}
    A range space is a pair $(X,\mathcal{R})$, where $X$ is a set, called ground set and $\mathcal{R}$ is a set of subsets $R \subseteq X$, called ranges.
\end{definition}

The projection of a range space $(X, \mathcal{R})$ onto a subset $Y \subseteq X$ is the range space $(Y, \{ Y \cap R \mid R \in \mathcal{R} \})$. Furthermore, for each range space there exists a complementary range space.

\begin{definition}
    \label{def:complementary_range_space}
    Let $F = (X,\mathcal{R})$ be a range space. We call  $\overline{F} = (X, \overline{\mathcal{R}})$, the range space over $\overline{\mathcal{R}} = \{ X \setminus R \mid R \in \mathcal{R} \}$, the complementary range space of $F$.
\end{definition}

A measure of the combinatorial complexity of a range space is the VC dimension.

\begin{definition}
    The VC dimension of a range space $(X,\mathcal{R})$ is the cardinality of a maximum cardinality subset $Y \subseteq X$, such that $\lvert \{ Y \cap R \mid R \in \mathcal{R} \} \rvert = 2^{\lvert Y \rvert}$.
\end{definition}

Note that $F$ and $\overline{F}$ have equal VC dimension and for any $Y \subseteq X$, the projection of $F$ onto $Y$ has VC dimension at most the VC dimension of $F$, see for example \cite{Har-Peled2011B}. We define $(\epsilon, \eta)$-approximations of range spaces.

\begin{definition}[{\cite[Definition 2.3]{Har-Peled2011}}]
    \label{def:etaepsilonapprox}
    Let $\epsilon, \eta \in (0,1)$ and $(X,\mathcal{R})$ be a range space with finite non-empty ground set. An $(\eta, \epsilon)$-approximation of $(X,\mathcal{R})$ is a set $S \subseteq X$, such that for all $R \in \mathcal{R}$
    \begin{align*}
        \left\lvert \frac{\lvert R \cap X \rvert}{\lvert X \rvert} - \frac{\lvert R \cap S \rvert}{\lvert S \rvert} \right\rvert \leq 
        \begin{cases}
            \epsilon \cdot \frac{\lvert R \cap X \rvert}{\lvert X \rvert}, & \text{if } \lvert R \cap X \rvert \geq \eta \cdot \lvert X \rvert \\
            \epsilon \cdot \eta, & \text{else}.
        \end{cases}
    \end{align*}
\end{definition}

The following theorem is useful for obtaining $(\epsilon, \eta)$-approximations.

\begin{theorem}[{\cite[Theorem 2.11]{Har-Peled2011}}]
    \label{theo:etaepsilonapprox}
    Let $(X,\mathcal{R})$ be a range space with finite non-empty ground set and VC dimension $\mathcal{D}$. Also, let $\epsilon, \delta, \eta \in (0,1)$. There is an absolute constant $c \in \mathbb{R}_{>0}$ such that a sample of $$ \frac{c}{\eta \cdot \epsilon^2} \cdot \left( \mathcal{D} \log \left(\frac{1}{\eta}\right) + \log\left(\frac{1}{\delta}\right) \right) $$ elements drawn independently and uniformly at random with replacement from $X$ is a $(\eta, \epsilon)$-approximation for $(X, \mathcal{R})$ with probability at least $1 - \delta$.
\end{theorem}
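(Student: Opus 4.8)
This is the relative (or $(\eta,\epsilon)$-) approximation theorem, and the plan is to prove it by the classical double-sampling (symmetrization) argument, upgraded from additive to relative error by replacing Hoeffding bounds with a relative Chernoff bound and combining this with the Sauer--Shelah lemma. Write $n = |X|$, let $N$ denote the sample size in the statement, and for $R \in \mathcal{R}$ and a multiset $S$ drawn with replacement from $X$ set $p_R = |R \cap X|/n$ and $\hat p_R(S) = |R \cap S|/|S|$ (with multiplicity). First I would unify the two cases of Definition~\ref{def:etaepsilonapprox}: it suffices to show that with probability at least $1-\delta$ we have $|\hat p_R(S) - p_R| \le \epsilon\max\{p_R,\eta\}$ for every $R \in \mathcal{R}$ simultaneously, since for $p_R \ge \eta$ this is exactly the first case and for $p_R < \eta$ it implies the additive bound $\epsilon\eta$. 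So, writing $\Psi(S,R) = |\hat p_R(S) - p_R| - \epsilon\max\{p_R,\eta\}$, the goal becomes $\Pr[\sup_R \Psi(S,R) > 0] \le \delta$.

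\emph{Symmetrization.} I would introduce an independent ghost sample $S'$ of the same size $N$. Since $N$ is large enough that, for each fixed $R$, a single relative Chernoff bound gives $\Pr[\,|\hat p_R(S') - p_R| > \tfrac{\epsilon}{2}\max\{p_R,\eta\}\,] \le \tfrac12$, a routine Markov argument conditioned on $S$ yields
\[ \Pr\!\left[\sup_R \Psi(S,R) > 0\right] \;\le\; 2\,\Pr\!\left[\sup_R\big(|\hat p_R(S) - \hat p_R(S')| - \tfrac{\epsilon}{2}\max\{p_R,\eta\}\big) > 0\right]. \]

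\emph{VC combinatorics, per-range tail, union bound.} Next I would view the $2N$ draws as i.i.d.\ uniform elements of $X$ and condition on the resulting combined multiset $M$ of size $2N$; then $(S,S')$ is a uniformly random balanced split of these $2N$ slots. Only the traces $R \cap M$ matter, and by the Sauer--Shelah lemma their number is at most the shatter function $\Pi_{\mathcal{R}}(2N) \le (2N)^{\mathcal{D}}$, using that the projection of $(X,\mathcal{R})$ onto $M$ has VC dimension at most $\mathcal{D}$. For a fixed trace with $a = |R \cap M|$, the event $|\hat p_R(S) - \hat p_R(S')| > \tfrac{\epsilon}{2}\max\{p_R,\eta\}$ is a deviation of a hypergeometric split, and a relative Chernoff / Serfling inequality bounds its probability by $\exp(-\Omega(N\eta\epsilon^2))$ --- the point being that on the relevant event the involved mass is $\Omega(\eta)$, which is precisely what turns the usual $1/\epsilon^2$ into $1/(\eta\epsilon^2)$. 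A union bound over the $(2N)^{\mathcal{D}}$ traces then gives $\Pr[\sup_R \Psi(S,R) > 0] \le 2(2N)^{\mathcal{D}} e^{-\Omega(N\eta\epsilon^2)}$, and choosing $N = \tfrac{c}{\eta\epsilon^2}\big(\mathcal{D}\log\tfrac1\eta + \log\tfrac1\delta\big)$ for a suitable absolute constant $c$ makes this at most $\delta$ (one checks that $(2N)^{\mathcal{D}}$ is absorbed into half the exponent, since $\log N = O(\log\tfrac1\eta + \log\tfrac{\mathcal{D}}{\delta})$).

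\emph{Main obstacle.} The delicate point is the heavy-versus-light dichotomy around the threshold $p_R \approx \eta$: the naive $\epsilon$-approximation argument only yields additive error $\epsilon$, and extracting the $1/\eta$ improvement forces one to combine the relative Chernoff bound with a careful truncation --- for extremely light ranges one must argue that the required deviation cannot be exceeded at all except on an event already charged to the failure budget, while for moderately light ranges the effective variance is $\Theta(\eta)$ rather than $\Theta(p_R)$. A cleaner alternative that sidesteps much of this bookkeeping would be to invoke the general framework of relative $(p,\epsilon)$-approximations: augment each range by a "distance" coordinate interpolating between relative and additive error, observe that the resulting augmented range space still has VC dimension $O(\mathcal{D})$, and apply a single generic $\epsilon$-approximation theorem to it.
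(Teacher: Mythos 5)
This theorem is not proved in the paper; it is imported as a black box from Har-Peled's book \cite{Har-Peled2011}, so there is no internal proof for your attempt to be checked against. Your outline is nonetheless the right family of ideas: it is the standard symmetrization route for relative $(\eta,\epsilon)$-approximations (ghost sample, condition on the pooled multiset, Sauer--Shelah trace count, per-trace tail bound on the balanced split, union bound), which is how Li--Long--Srinivasan and the exposition in Har-Peled's book proceed.

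There is, however, a genuine gap in your closing calculation. You claim $(2N)^{\mathcal{D}}$ is absorbed because ``$\log N = O(\log\frac{1}{\eta} + \log\frac{\mathcal{D}}{\delta})$''. That is false: $\log N$ also carries a $\log\frac{1}{\epsilon^{2}}$ term, and it does not go away. Concretely, take $\mathcal{D}=1$ and $\eta=\delta=\frac{1}{2}$ and let $\epsilon\to 0$. Then $N=\Theta(\epsilon^{-2})$, so $\mathcal{D}\log(2N)=\Theta(\log\frac{1}{\epsilon})\to\infty$ while $N\eta\epsilon^{2}=\Theta(1)$ stays bounded, and the product $2(2N)^{\mathcal{D}}e^{-\Omega(N\eta\epsilon^{2})}$ does not drop below $\delta$. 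The step ``symmetrize, then union-bound over all $(2N)^{\mathcal{D}}$ traces with a single worst-case tail $e^{-\Omega(N\eta\epsilon^{2})}$'' only yields a sample size of order $\frac{1}{\eta\epsilon^{2}}\,\mathcal{D}\log\frac{N}{\delta}$, a $\log\frac{1}{\epsilon}$ factor worse than stated. Closing this is precisely the nontrivial content of Li--Long--Srinivasan: one must exploit that the per-trace tail is $\exp(-\Omega(N\epsilon^{2}\max\{p_{R},\eta\}))$, i.e.\ it improves with the trace's own mass, via a stratified union bound over dyadic weight classes or a chaining/shifting scheme, rather than uniformly replacing the exponent by its lower bound $N\eta\epsilon^{2}$. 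So the delicate point is not only the heavy-versus-light threshold you flagged but the union bound itself, and your proposed ``cleaner alternative'' of feeding an augmented range space to a generic additive approximation theorem does not avoid it either --- an additive $(\eta\epsilon)$-approximation gives the wrong $\frac{1}{\eta^{2}\epsilon^{2}}$ dependence.
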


We define open metric balls, which are the ranges used to derive our result.

\begin{definition}
    \label{def:open_metric_ball}
    For $r \in \mathbb{R}_{\geq 0}$, $z \in Z$ and $Y \subseteq X$ we denote by $\pball(z,r,Y) = \{ y \in Y \mid \rho(y,z) < r \}$ the open metric ball with center $z$ and radius $r$. We denote the set of all open metric balls by $\mathbb{B}(Y, Z) = \{ \pball(z,r,Y) \mid z \in Z, r \in \mathbb{R}_{\geq 0} \}$.
\end{definition}

Now, we are ready to analyze the computation of the actual $\epsilon$-coresets. We use the reduction to uniform sampling, introduced by Feldman and Langberg \cite{Feldman2011} and improved by Braverman et al. \cite{DBLP:journals/corr/BravermanFL16} (using \cref{theo:etaepsilonapprox}). Preferably we would apply Theorem 31 by Feldman et al. \cite{DBLP:journals/siamcomp/FeldmanSS20}, which however is not possible since it depends on a range space where each function $f \in F$ may be assigned a distinct scaling factor. This is incompatible with the range space induced by the open metric balls we use to obtain our result. However, by adapting and modifying the proof of their theorem we can derive the desired and more versatile result. To handle necessary scaling factors still involved in the analysis, we incorporate results by Munteanu et al. \cite{DBLP:conf/nips/MunteanuSSW18} for bounding the VC dimension.

\begin{theorem}
    \label{theo:coreset}
    For $f \in F$ we let $\lambda(f) = \left\lceil \lvert F \rvert \cdot 2^{\lceil \log_2(\gamma(f)) \rceil} \right\rceil / \lvert F \rvert$, $\Lambda = \sum_{f \in F} \lambda(f)$, $\psi(f) = \frac{\lambda(f)}{\Lambda}$ and $\mathcal{D}$ be the VC dimension of the range space $(Y, \mathbb{B}(Y,Z))$. Let $\delta, \epsilon \in (0,1)$. A sample $S$ of $\Theta\left(\epsilon^{-2} \alpha k^\prime (\mathcal{D} k \log(k) \log(\alpha k^\prime n) \log(\alpha k^\prime) + \log(1/\delta))\right)$ elements $\tau_i$ from $T$, drawn independently with replacement with probability $\psi(f_i)$ and weighted by $w(f_i) = \frac{\Lambda}{\lvert S \rvert \lambda(f_i)}$ is an $\epsilon$-coreset with probability at least $1 - \delta$. 
\end{theorem}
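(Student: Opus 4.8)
The plan is to cast the statement as an instance of the reduction to uniform sampling of Feldman and Langberg~\cite{Feldman2011}, in the refined form of Braverman et al.~\cite{DBLP:journals/corr/BravermanFL16} and Feldman et al.~\cite{DBLP:journals/siamcomp/FeldmanSS20}, but carried out with the range space induced by the \emph{open} metric balls $\mathbb{B}(Y,Z)$, and then to absorb the per-input rescalings $\lambda(f_i)$ by exploiting that $\lambda$ attains only few distinct values. Throughout we may assume $\cost{T}{C}>0$ for the $C$ under consideration (otherwise \cref{def:coreset} holds trivially) and we write $s=\lvert S\rvert$.

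First I would record the elementary facts about $\lambda$. Since $2^{\lceil\log_2\gamma(f)\rceil}\in[\gamma(f),2\gamma(f))$ and $\lceil x\rceil\in[x,x+1)$, \cref{lem:sensitivities} yields $\mathfrak{s}(f,F)\le\gamma(f)\le\lambda(f)\le 2\gamma(f)+1/\lvert F\rvert$, hence $\Lambda\le 2\Gamma+1=O(\alpha k^\prime)$, while $\lvert F\rvert\lambda(f)\in\mathbb{N}$ by construction. Because the last summand of $\gamma(f)$ in \cref{lem:sensitivities} is at least $2/n$ while $\gamma(f)\le\Gamma=O(\alpha k^\prime)$, the exponent $\lceil\log_2\gamma(f)\rceil$ ranges over an integer interval of length $O(\log(\alpha k^\prime n))$; thus $\lambda$ takes at most $t^\ast=O(\log(\alpha k^\prime n))$ distinct values, and I let $I_1,\dots,I_{t^\ast}$ be the partition of $\{1,\dots,n\}$ into the corresponding level sets.

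Next I would reduce to uniform sampling. For every $k$-set $C\subseteq Z$ and every $i$ one has $f_i(C)\le\mathfrak{s}(f_i,F)\cost{T}{C}\le\lambda(f_i)\cost{T}{C}$, so $g_C(i):=f_i(C)/(\lambda(f_i)\cost{T}{C})\in[0,1]$. Let $U$ be the multiset containing $\lvert F\rvert\lambda(f_i)$ copies of index $i$ for each $i$; then $\lvert U\rvert=\lvert F\rvert\Lambda$, drawing $\tau_i$ with probability $\psi(f_i)$ is the same as drawing a uniformly random element of $U$, a routine computation gives $\lvert U\rvert^{-1}\sum_{u\in U}g_C(u)=1/\Lambda$, and with the weights $w(f_i)=\Lambda/(s\lambda(f_i))$ of the theorem $\pcost_w\parens*{S,C}=\Lambda\,\cost{T}{C}\cdot s^{-1}\sum_{u\in S}g_C(u)$. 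Hence the coreset inequality for $C$ amounts to the empirical and the exact average of $g_C$ differing by at most $\epsilon/\Lambda$. Writing $R_{C,t}=\{u\in U:g_C(u)\ge t\}$ and $\mathcal{R}=\{R_{C,t}:C\subseteq Z,\ \lvert C\rvert=k,\ t\in(0,1]\}$, the layer-cake identity $\lvert A\rvert^{-1}\sum_{u\in A}g_C(u)=\int_0^1\lvert R_{C,t}\cap A\rvert/\lvert A\rvert\,dt$ shows that an $(\eta,\epsilon^\prime)$-approximation $S$ of $(U,\mathcal{R})$ makes this difference at most $\int_0^1\epsilon^\prime\max\{\lvert R_{C,t}\cap U\rvert/\lvert U\rvert,\eta\}\,dt\le\epsilon^\prime(1/\Lambda+\eta)$, uniformly in $C$; choosing $\eta=1/\Lambda$ and $\epsilon^\prime=\epsilon/2$ caps it at $\epsilon/\Lambda$, so such an $S$ is an $\epsilon$-coreset. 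By \cref{theo:etaepsilonapprox}, a uniform sample from $U$ of size $\Theta\parens*{\Lambda\epsilon^{-2}(\mathcal{D}^\prime\log\Lambda+\log(1/\delta))}$ is an $(\eta,\epsilon^\prime)$-approximation with probability at least $1-\delta$, where $\mathcal{D}^\prime$ is the VC dimension of $(U,\mathcal{R})$; since $\Lambda=O(\alpha k^\prime)$ and $\log\Lambda=O(\log(\alpha k^\prime))$, this matches the claimed size provided $\mathcal{D}^\prime=O(\mathcal{D}\,k\log(k)\,\log(\alpha k^\prime n))$.

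The hard part, and where I expect to spend the real effort, is precisely this VC-dimension bound. Substituting $\sigma:=t\,\cost{T}{C}\ge0$, a copy of $i$ lies in $R_{C,t}$ iff $\tau_i\notin\{y\in Y:\rho(y,c)<\sigma\lambda(f_i)\}$ for every $c\in C$; as all copies of a fixed index are jointly in or out of every range, $\mathcal{D}^\prime$ equals the VC dimension of the index-level range space whose ranges are $\bigcap_{c\in C}\{i:\rho(\tau_i,c)\ge\sigma\lambda(f_i)\}$. The complementary range space of $(Y,\mathbb{B}(Y,Z))$ has VC dimension $\mathcal{D}$; the sole deviation from plain open balls is that the radius around $c$ is rescaled by one of the $t^\ast=O(\log(\alpha k^\prime n))$ fixed factors $\lambda^{(p)}$ according to the block $I_p$ containing $i$, and controlling the VC dimension under such a bounded family of rescalings is exactly the purpose of the technique of Munteanu et al.~\cite{DBLP:conf/nips/MunteanuSSW18}, which costs only a factor $O(t^\ast)$; the $k$-fold intersection then costs the standard factor $O(k\log k)$, yielding $\mathcal{D}^\prime=O(\mathcal{D}\,k\log(k)\,\log(\alpha k^\prime n))$. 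This is the point at which Theorem~31 of \cite{DBLP:journals/siamcomp/FeldmanSS20} cannot be invoked verbatim, since its per-function scalings live in an incompatible range space; adapting its proof while feeding in $\mathcal{D}$ for the open balls together with the finiteness of the value set of $\lambda$ is the crux of the argument.
\qed
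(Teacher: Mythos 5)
Your proposal is correct and follows essentially the same route as the paper's own proof: you reduce the weighted sensitivity sampling to uniform sampling over a multiset with $\lvert F\rvert\lambda(f_i)$ copies per index, invoke the layer-cake identity together with an $(\eta,\epsilon')$-approximation from \cref{theo:etaepsilonapprox} with $\eta=1/\Lambda$ and $\epsilon'=\epsilon/2$, and control the VC dimension by partitioning indices into the level sets of $\lambda$ so that the range space splits into $O(\log(\alpha k'n))$ blocks with uniform scaling, each handled by the $k$-fold union bound for open metric balls and combined via the Munteanu et al.\ counting argument. The only cosmetic difference is that you normalize $g_C$ to $[0,1]$ (absorbing $\lambda(f_i)\cost{T}{C}$ into the denominator) and bound the number of distinct $\lambda$ values directly from the global inequalities $2/n\le\gamma(f)\le\Gamma=O(\alpha k')$, whereas the paper normalizes differently and derives the same $O(\log(\alpha k'n))$ bound per Voronoi cell; both are sound and yield the same sample size.
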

\begin{proof}
    We define for $C \subseteq Z$ with $\lvert C \rvert = k$ the estimator $$\scost{S}{C} = \sum_{\tau_i \in S} w(f_i) \cdot \min_{c \in C} \rho(\tau_i, c) = \sum_{\tau_i \in S} w(f_i) \cdot f_i(C) = \sum_{\tau_i \in S} \frac{\Lambda}{\lvert S \rvert\lambda(f_i)} f_i(C)$$ for $\cost{T}{C}$. We see that $$\pexpected[\scost{S}{C}] = \sum_{i=1}^{\lvert S \rvert} \sum_{\tau_j \in T} \frac{\Lambda}{\lvert S \rvert\lambda(f_j)} f_j(C) \frac{\lambda(f_j)}{\Lambda} = \sum_{\tau_j \in T} f_j(C) = \cost{T}{C},$$ thus $\scost{S}{C}$ is unbiased. We want to bound the error of $\scost{S}{C}$ by applying \cref{theo:etaepsilonapprox}. To do so, we reduce the sensitivity sampling to uniform sampling as follows: We let $G$ be a multiset that is a copy of $F$, where each $f \in F$ is contained $\lvert F \rvert \lambda(f)$ times and is scaled by $\frac{1}{\lvert F \rvert \lambda(f)}$. Note that $\lvert G \rvert = \lvert F \rvert \Lambda$. Also, $\psi(f) = \frac{\lvert F \rvert \lambda(f)}{\lvert G \rvert}$, $\lvert F \rvert \lambda(f)$ is integral for each $f \in F$ and $$\sum_{g \in G} g(C) = \sum_{f \in F} \frac{\lvert F \rvert \lambda(f)}{\lvert F \rvert \lambda(f)} f(C) = \sum_{f \in F} f(C) = \cost{T}{C}.$$
    
    Given a sample $S^\prime$, with $\lvert S^\prime \rvert = \lvert S \rvert$, drawn independently and uniformly at random with replacement from $G$, for $C \subseteq Z$ with $\lvert C \rvert = k$ we define the estimator $$\ucost{S^\prime}{C} = \frac{\lvert G \rvert}{\lvert S^\prime \rvert} \sum_{g \in S^\prime} g(C)$$ for $\cost{T}{C}$. We see that
    \begin{align*}
        \pexpected\left[\ucost{S^\prime}{C}\right] & = \frac{\lvert G \rvert}{\lvert S^\prime \rvert} \sum_{i=1}^{\lvert S^\prime \rvert} \sum_{f \in F} \frac{f(C)}{\lvert F \rvert \lambda(f)} \frac{\lvert F \rvert \lambda(f)}{\lvert G \rvert} = \frac{1}{\lvert S^\prime \rvert} \sum_{i=1}^{\lvert S^\prime \rvert} \sum_{f \in F} f(C) = \sum_{g \in G} g(C).
    \end{align*}
    Thus, $\ucost{S^\prime}{C}$ is unbiased, too. We now assume that $S^\prime = \left\{ \frac{1}{\lvert F \rvert\lambda(f_i)} \cdot f_i \mid \tau_i \in S \right\}$. Then, 
    \begin{align*}
        \ucost{S^\prime}{C} & = \frac{\lvert G \rvert}{\lvert S^\prime \rvert} \sum_{g \in S^\prime} g(C) = \frac{\lvert F \rvert \Lambda}{\lvert S^\prime \rvert} \sum_{\tau_i \in S} \frac{1}{\lvert F \rvert \lambda(f_i)} f_i(C) = \sum_{\tau_i \in S} \frac{\Lambda}{\lvert S \rvert \lambda(f_i)} f_i(C) \\
        & = \scost{S}{C},
    \end{align*}
    so the error bound for $\ucost{S^\prime}{C}$, that we derive in the following, also applies to $\scost{S}{C}$, hence $S$ together with $w$ is a weighted $\epsilon$-coreset (see \cref{def:coreset}). We now apply \cref{theo:etaepsilonapprox} with the given $\delta$, $\epsilon/2$ and $\eta = 1/\Lambda$, so the overall error is at most $\epsilon \cdot \cost{T}{C}$ for each $C \subseteq Z$ with $\lvert C \rvert = k$.
    
    For $H \subseteq G$, $C \subseteq Z$ and $r \in \mathbb{R}_{\geq 0}$, we let $\range{H}{C}{r} = \{ g \in H \mid g(C) \geq r \}$. Now, we let $(G, \mathcal{R})$ be a range space over $G$, where $\mathcal{R} = \{ \range{G}{C}{r} \mid r \in \mathbb{R}_{\geq 0}, C \subseteq Z, \lvert C \rvert = k \}$. For all $C \subseteq Z$ with $\lvert C \rvert = k$ and all $H \subseteq G$ we have that
    \begin{align*}
        \sum_{g \in H} g(C) & = \sum_{g \in H} \int_0^\infty \ind{g(C) \geq r}\ \mathrm{d}r = \int_0^\infty \sum_{g \in H} \ind{g(C) \geq r} \ \mathrm{d}r \\
        & = \int_0^\infty \lvert \range{H}{C}{r} \rvert \ \mathrm{d}r. \tag{I}\label{eq:intidentity}
    \end{align*}
    Note that the indicator function is integrable under this circumstances and $\lvert \range{H}{C}{r} \rvert$ is a step function and is integrable, too. Using this identity, for all $C \subseteq Z$ with $\lvert C \rvert = k$ we now bound the error introduced by $\scost{S}{C}$:
    \begin{align*}
        \left\lvert \cost{T}{C} - \scost{S}{C} \right\rvert & = \left\lvert \cost{T}{C} - \ucost{S^\prime}{C} \right\rvert = \left\lvert \sum_{g \in G} g(C) - \frac{\lvert G \rvert}{\lvert S^\prime \rvert} \sum_{g \in S^\prime} g(C) \right\rvert \\ 
        & = \left\lvert \int_0^\infty \lvert \range{G}{C}{r} \rvert \ \mathrm{d} r -  \frac{\lvert G \rvert}{\lvert S^\prime \rvert} \int_0^\infty \lvert \range{S^\prime}{C}{r} \rvert \ \mathrm{d} r \right\rvert \\
        & = \left\lvert \int_0^\infty \lvert \range{G}{C}{r} \rvert - \frac{\lvert G \rvert}{\lvert S^\prime \rvert} \lvert \range{S^\prime}{C}{r} \rvert \ \mathrm{d}r \right \rvert \\
        & \leq \int_0^\infty \left\lvert \lvert \range{G}{C}{r} \rvert - \frac{\lvert G \rvert}{\lvert S^\prime \rvert} \lvert \range{S^\prime}{C}{r} \rvert \right \rvert \mathrm{d}r.
    \end{align*}
    Here the second equation follows from \cref{eq:intidentity}.
    
    In the following, let $\serror{C}{r} = \left\lvert \lvert \range{G}{C}{r} \rvert - \frac{\lvert G \rvert}{\lvert S^\prime \rvert} \lvert \range{S^\prime}{C}{r} \rvert \right \rvert$, $r_u(C) = \max\limits_{g \in G} g(C)$, $R_1(C) = \{ r \in \mathbb{R}_{\geq 0} \mid \lvert \range{G}{C}{r} \rvert \geq \eta \cdot \lvert G \rvert \}$ and $R_2(C) = \mathbb{R}_{\geq 0} \setminus R_1(C)$. Note that $R_1(C)$ and $R_2(C)$ are intervals due to the monotonicity of $\lvert \range{G}{C}{r} \rvert$. Furthermore, for $r \in (r_u(C), \infty)$ it holds that $\lvert \range{G}{C}{r} \rvert = 0$. Using these facts, we further derive:
    \begin{align*}
        \int_0^\infty \serror{C}{r} \ \mathrm{d}r & = \int_{R_1} \serror{C}{r} \ \mathrm{d}r + \int_{R_2} \serror{C}{r} \ \mathrm{d}r \\
        & \leq \int_{R_1} \frac{\epsilon}{2} \lvert \range{G}{C}{r} \rvert \ \mathrm{d}r + \int_{R_2} \frac{\epsilon}{2} \eta \lvert G \rvert \ \mathrm{d}r \\
        & \leq \frac{\epsilon}{2} \int\limits_{0}^{\infty} \lvert \range{G}{C}{r} \rvert \ \mathrm{d}r + \frac{\epsilon \eta \lvert G \rvert}{2} \int\limits_{0}^{r_u(C)} \ \mathrm{d} r \\
        & = \frac{\epsilon}{2} \sum_{g \in G} g(C) + \frac{\epsilon \eta \lvert G \rvert r_u(C)}{2}. \tag{III} \label{eq:half_bound}
    \end{align*}
    Here, the first inequality follows from \cref{def:etaepsilonapprox} and in the last equation we use \cref{eq:intidentity}. Finally, we bound the last summand in \cref{eq:half_bound}. First note that we have for each $g \in G$:
    \begin{align*}
        \frac{g(C)}{\sum_{h \in G} h(C)} = \frac{\frac{1}{\lvert F \rvert \lambda(f)} f(C)}{\sum_{h \in F} h(C)} \leq \frac{\lambda(f)}{\lvert F \rvert \lambda(f)} & \iff \frac{g(C)}{\sum_{h \in G} h(C)} \leq \frac{1}{\lvert F \rvert},
    \end{align*}
    where $f \in F$ is the function that $g$ is a copy of and the inequality follows from \cref{def:sensitivity}. This implies $r_u(C) \leq \frac{1}{\lvert F \rvert} \sum_{h \in G} h(C)$. We now further derive:
    \begin{align*}
        \frac{\epsilon \eta \lvert G \rvert r_u(C)}{2} & \leq \frac{\epsilon}{2} \frac{1}{\Lambda} \lvert F \rvert \Lambda \frac{1}{\lvert F \rvert} \sum_{g \in G} g(C) = \frac{\epsilon}{2} \sum_{g \in G} g(C).
    \end{align*}
    So, all in all $\lvert \cost{T}{C} - \scost{S}{C} \rvert \leq \epsilon \cdot \cost{T}{C}$ for all $C \subseteq Z$ with $\lvert C \rvert = k$.
    \\
    
    The claim now follows from the facts that 
    \begin{itemize}
        \item $\gamma(f) \leq \lambda(f) \leq 2 \cdot \gamma(f) + \frac{1}{\lvert F \rvert}$ for each $f \in F$, thus $\Lambda \leq 2 \cdot \Gamma(F) + 1$ and $\Gamma(F) \in O(\alpha k^\prime)$ by \cref{lem:sensitivities} and
        \item $(G, \mathcal{R})$ has VC dimension in $O(\mathcal{D} k \log(k) \log(\alpha k^\prime n))$ by the following \cref{lem:vc_g}.
    \end{itemize}
\qed\end{proof}

\begin{lemma}
    \label{lem:vc_g}
    $(G, \mathcal{R})$ has VC dimension $O(\mathcal{D} k \log(k) \log(\alpha k^\prime n))$, where $\mathcal{D}$ is the VC dimension of the range space $(Y, \mathbb{B}(Y,Z))$.
\end{lemma}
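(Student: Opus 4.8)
\emph{Proof plan for \cref{lem:vc_g}.} The plan is to push the range space $(G,\mathcal{R})$ from the multiset $G$ onto the underlying point set $Y$, decompose it into a few ``scales'', and control each scale through the VC dimension of $\mathbb{B}(Y,Z)$. First I would unwind the definitions: each $g\in G$ is a copy of some $f_i\in F$ scaled by $\tfrac{1}{\lvert F\rvert\lambda(f_i)}$, so for a $k$-set $C$ and threshold $r$ the condition $g(C)\ge r$ is equivalent to $\min_{c\in C}\rho(\tau_i,c)\ge r\lvert F\rvert\lambda(f_i)$, i.e.\ to $\tau_i\in\bigcap_{c\in C}\bigl(Y\setminus\pball(c,\,r\lvert F\rvert\lambda(f_i),\,Y)\bigr)$ --- an intersection of $k$ complements of open metric balls whose common radius is rescaled by the \emph{per-element} factor $\lvert F\rvert\lambda(f_i)$. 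Since two copies of the same $f_i$ are never separated by any range, the VC dimension of $(G,\mathcal{R})$ equals that of the range space induced on $\{\tau_i : f_i\in F\}\subseteq Y$.

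Second I would show that only few rescaling factors occur. Since $\lambda(f)$ depends only on $\lceil\log_2\gamma(f)\rceil$, and \cref{lem:sensitivities} gives $\gamma(f)\le\Gamma=2k^\prime+2\sqrt{6\alpha k^\prime}+3\alpha\in O(\alpha k^\prime)$ while the last summand in the formula for $\gamma(f)$ already forces $\gamma(f)\ge 2/\lvert\hat{V}_i\rvert\ge 2/n$, the integer $\lceil\log_2\gamma(f)\rceil$ ranges over an interval of length $O(\log(\alpha k^\prime n))$. Hence $\lambda$ attains only $t=O(\log(\alpha k^\prime n))$ distinct values, and I partition $G=G_1\sqcup\dots\sqcup G_t$ so that on each $G_j$ the rescaling is a single constant. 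Within one part, a range of $\mathcal{R}$ restricts exactly to (the preimage of) a $k$-fold intersection of complements of balls from $\mathbb{B}(Y,Z)$; the complementary range space $\overline{\mathbb{B}(Y,Z)}$ has VC dimension $\mathcal{D}$, the standard bound for $k$-fold Boolean combinations raises this to $O(\mathcal{D}k\log k)$, and projecting onto $G_j$ cannot increase it, so each part's range space has VC dimension $O(\mathcal{D}k\log k)$.

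Finally I would assemble the $t$ parts: every range of $\mathcal{R}$ is a disjoint union $R_1\sqcup\dots\sqcup R_t$ with $R_j$ from the $j$-th (disjoint) family, so $\mathcal{R}$ embeds into the range space of all such unions. Bounding its shatter function on an $m$-element set by $\prod_j\pi_j(m_j)$ over a split $m=\sum_j m_j$, applying Sauer--Shelah to each factor, using $\prod_j m_j\le(m/t)^t$ to keep a $\tfrac{1}{t}$-type factor in the base, and inverting, yields VC dimension $O\bigl(t\cdot\mathcal{D}k\log k\bigr)=O\bigl(\mathcal{D}k\log k\log(\alpha k^\prime n)\bigr)$; this is exactly the sort of bookkeeping handled by the VC-dimension estimates of Munteanu et al.\ \cite{DBLP:conf/nips/MunteanuSSW18}, which I would invoke to make it clean. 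I expect the assembly step to be the main obstacle: a naive union bound over the $t$ parts loses an extra $\log\log$ factor in the Sauer--Shelah inversion, so one must exploit that the threshold $r$ (not just $C$) is shared across all $t$ scales --- equivalently, retain the $\tfrac{1}{t\mathcal{D}}$ factor in the base of the shatter-function bound --- to land the stated bound. Minor care is also needed for the multiset structure of $G$ and the degenerate case $\hat{\Delta}=0$.
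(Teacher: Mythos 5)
Your plan matches the paper's proof in all essential respects: reduce the uniform-scaling case through the complement/projection chain to $k$-fold unions of balls from $\mathbb{B}(Y,Z)$, apply the $k$-fold union theorem of Blumer et al.~\cite{DBLP:journals/jacm/BlumerEHW89} to get $O(\mathcal{D}k\log k)$ per scale, bound the number $t$ of distinct values of $\lambda$ by $O(\log(\alpha k^\prime n))$ from the explicit upper and lower bounds on $\gamma$, and combine the scales via the Munteanu et al.~\cite{DBLP:conf/nips/MunteanuSSW18} technique. Your final paragraph mildly overcomplicates the assembly step: the argument Munteanu et al.\ (and the paper) actually use is a direct pigeonhole on a hypothetical shattered set of size greater than $t\cdot c\mathcal{D}k\log k$ --- since the shattered set's parts $G^\prime_j = G^\prime\cap G_j$ are disjoint, each is itself shattered under restriction, and one must exceed $c\mathcal{D}k\log k$, contradicting the per-scale bound --- which needs neither a Sauer--Shelah inversion nor the observation that $r$ is shared across the scales, so the $\log\log$ worry you raise does not arise.
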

\begin{proof}
    First, we assume that there exists a $\varphi \in \mathbb{R}$, such that $\lambda(f) = \varphi$ for all $f \in F$. Then all functions in $G$ are scaled uniformly and we can completely neglect the scaling. In this case $(G, \mathcal{R})$ has equal VC dimension as $$Q_1 = (T, \{ T \setminus (\pball(c_1, r, T) \cup \dots \cup \pball(c_k, r, T)) \mid \{c_1, \dots, c_k\} \subseteq Z, r \in \mathbb{R}_{\geq 0} \}),$$ the VC dimension of $Q_1$ is at most the VC dimension of $$Q_2 = (T, \{T \setminus (B_1 \cup \dots \cup B_k) \mid B_1, \dots, B_k \in \mathbb{B}(T,Z) \}),$$ $Q_2$ is the projection of $$Q_3 = (Y, \{ Y \setminus (B_1 \cup \dots \cup B_k) \mid B_1, \dots, B_k \in \mathbb{B}(Y, Z)\})$$ onto $T$ and has thus at most the VC dimension of $Q_3$ and finally, $Q_3$ is the complementary range space of $$Q_4 = (Y, \{ B_1 \cup \dots \cup B_k \mid B_1, \dots, B_k \in \mathbb{B}(Y, Z)\})$$ and has thus equal VC dimension as $Q_4$. By the $k$-fold union theorem \cite[Lemma 2.3.2]{DBLP:journals/jacm/BlumerEHW89} $Q_4$ has VC dimension $O(\mathcal{D} \cdot k \log(k))$, where $\mathcal{D}$ is the VC dimension of $(Y, \mathbb{B}(Y,Z))$. For the following, let $c$ be the constant hidden in this O-notation.
    \\
    
    Contrary to the former case, if there are $t > 1$ distinct values $\Phi = \{ \varphi_1, \dots, \varphi_t \} \subset \mathbb{R}$, such that $\lambda(f) \in \Phi$ for each $f \in F$ and $\forall i \in [t] \exists f \in F: \lambda(f) = \varphi_i$, we apply the techniques of Munteanu et al. \cite{DBLP:conf/nips/MunteanuSSW18} (see Lemma 11 and Theorem 15 therein).
    \\
    
    First, assume that the VC dimension of $(G, \mathcal{R})$ is greater than $t \cdot c \cdot \mathcal{D} \cdot k \log(k)$. Hence there exists a set $G^\prime \subseteq G$ with $\lvert G^\prime \rvert > t \cdot c \cdot \mathcal{D} \cdot k \log(k)$, such that $\lvert \{ G^\prime \cap R \mid R \in \mathcal{R} \} \rvert = 2^{\lvert G^\prime \rvert}$. Let $\{ G_1, \dots, G_t \}$ be a partition of $G$, such that for each $g \in G_i$ we have $g = \frac{1}{\lvert F \rvert \lambda(f)} f = \frac{1}{\lvert F \rvert \varphi_i} f$ for an $f \in F$. Furthermore, for $i \in [t]$, let $G^\prime_i = G^\prime \cap G_i$. 
    
    By disjointness, we have $\lvert \{ G^\prime_i \cap R_i \mid R_i = (R \cap G_i), R \in \mathcal{R} \} \rvert = 2^{\lvert G^\prime_i \rvert}$ for each $i \in [t]$ and also there must exist at least one $j \in [t]$, such that $\lvert G^\prime_j \rvert \geq \frac{\lvert G^\prime \rvert}{t} > \frac{t \cdot c \cdot \mathcal{D} \cdot k \log(k)}{t} = c \cdot \mathcal{D} \cdot k \log(k)$, hence the projection of $(G, \mathcal{R})$ on $G_j$ has VC dimension greater than $c \cdot \mathcal{D} \cdot k \log(k)$. This is a contradiction to the former case of uniformly scaled functions in $G$, thus $(G, \mathcal{R})$ has VC dimension $O(t \cdot \mathcal{D} \cdot k \log(k))$ in this case.
    \\
    
    Now, we bound $t$. Recall that for each $f \in F$, $\lambda(f) = \left\lceil \lvert F \rvert 2^{\lceil \log_2(\gamma(f)) \rceil} \right\rceil/\lvert F \rvert$. Furthermore for each $i \in [k^\prime]$ and $\tau_j \in \hat{V}_i$ (see \cref{lem:sensitivities}), 
    \begin{align*}
        \left(1 + \sqrt{\frac{3\alpha}{2k^\prime}} \right) \frac{2}{\lvert \hat{V}_i \rvert} \leq \gamma(f_j) \leq \left(1 + \sqrt{\frac{2k^\prime}{3\alpha}} \right) \left(\alpha + \frac{2\alpha}{\lvert \hat{V}_i \rvert} \right) + \left(1 + \sqrt{\frac{3\alpha}{2k^\prime}} \right) \frac{2}{\lvert \hat{V}_i \rvert}.
    \end{align*}
    Therefore, there can be at most 
    \begin{align*}
        & \log_2\left(\frac{\left(1 + \sqrt{\frac{2k^\prime}{3\alpha}} \right) \left(\alpha + \frac{2\alpha}{\lvert \hat{V}_i \rvert} \right) + \left(1 + \sqrt{\frac{3\alpha}{2k^\prime}} \right) \frac{2}{\lvert \hat{V}_i \rvert}}{\left(1 + \sqrt{\frac{3\alpha}{2k^\prime}} \right) \frac{2}{\lvert \hat{V}_i \rvert}}\right) \\
        \leq & \log_2\left(\frac{\left(1 + \sqrt{\frac{2k^\prime}{3\alpha}} \right)}{\left(1 + \sqrt{\frac{3\alpha}{2k^\prime}} \right)} (\alpha n/2 + \alpha) + 1\right) \\
        \leq & \log_2(9\alpha k^\prime(\alpha n/2 + \alpha) + 1)
    \end{align*}
    distinct values of $2^{\lceil \log_2(\gamma(f))\rceil}$, which upper bounds the number of distinct values of $\lambda(f)$. 
    
    We conclude that the VC dimension of $(G, \mathcal{R})$ is in $O(\mathcal{D} k \log(k) \log(\alpha k^\prime n))$.
\qed\end{proof}

\section{Coresets for $(k,\ell)$-Median Clustering under the Fréchet Distance}
\label{sec:coresets_frechet}

Now we present an algorithm for computing $\epsilon$-coresets for $(k, \ell)$-median clustering of polygonal curves under the Fr\'echet distance. We start by defining polygonal curves.

\begin{definition}
    \label{def:polygonal_curve}
	A (parameterized) curve is a continuous mapping $\tau \colon [0,1] \rightarrow \mathbb{R}^d$. A curve $\tau$ is polygonal, iff there exist $v_1, \dots, v_m \in \mathbb{R}^d$, no three consecutive on a line, called $\tau$'s vertices, and $t_1, \dots, t_m \in [0,1]$ with $t_1 < \dots < t_m$, $t_1 = 0$ and $t_m = 1$, called $\tau$'s instants, such that $\tau$ connects every two consecutive vertices $v_i = \tau(t_i), v_{i+1} = \tau(t_{i+1})$ by a line segment.
\end{definition}
We call the segments $\overline{v_1v_2}, \dots, \overline{v_{m-1}v_m}$ edges of $\tau$ and $m$ the complexity of $\tau$, denoted by $\lvert \tau \rvert$. 

\begin{definition}
    \label{def:frechet_distance}
    Let $\mathcal{H}$ denote the set of all continuous bijections $h\colon [0,1] \rightarrow [0,1]$ with $h(0) = 0$ and $h(1) = 1$, which we call reparameterizations.
    The Fr\'echet distance between curves $\sigma$ and $\tau$ is $ d_F(\sigma, \tau) = \inf_{h \in \mathcal{H}}\  \max_{t \in [0,1]}\ \lVert \sigma(t) - \tau(h(t)) \rVert$.
\end{definition}

Now we introduce the classes of curves we are interested in.

\begin{definition}
    For $d \in \mathbb{N}$, we define by $\mathbb{X}^d$ the set of equivalence classes of polygonal curves (where two curves are equivalent, iff they can be made identical by a reparameterization) in ambient space $\mathbb{R}^d$. For $m \in \mathbb{N}$ we define by $\mathbb{X}^d_m$ the subclass of polygonal curves of complexity at most $m$.
\end{definition}

Finally, we define the $(k,\ell)$-median clustering problem for polygonal curves.

\begin{definition}
    The $(k,\ell)$-median clustering problem is defined as follows, where $k,\ell \in \mathbb{N}$ are fixed (constant) parameters of the problem: given a set $T \subset \mathbb{X}^d_m$ of $n$ polygonal curves, compute a set of $k$ curves $C^\ast \subset \mathbb{X}^d_\ell$, such that $\cost{T}{C^\ast} = \sum\limits_{\tau \in T} \min\limits_{c^\ast \in C^\ast} d_F(\tau, c^\ast)$ is minimal.
\end{definition}

We bound the VC dimension of metric balls under the Fr\'echet distance by showing that a result of Driemel et al. \cite{Driemel19} holds also in our setting.

\begin{theorem}
    \label{theo:open_frechet_ball_vc}
    The VC dimension of $(\mathbb{X}^d_m, \mathbb{B}(X^d_m, X^d_\ell))$ is $O\left(\ell^2 \log(\ell m) \right)$.
\end{theorem}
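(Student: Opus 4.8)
The plan is to reduce the statement about open metric balls $\mathbb{B}(\mathbb{X}^d_m, \mathbb{X}^d_\ell)$ to the analogous statement about \emph{closed} metric balls, for which Driemel et al.~\cite{Driemel19} already give a bound of $O(\ell^2 \log(\ell m))$. The key observation is that the VC dimension of a range space is essentially unchanged when we pass between open and closed balls, because an open ball can be written as a countable union of closed balls of strictly smaller radii (and vice versa, a closed ball is a countable intersection of open balls), and more to the point, when restricted to a \emph{finite} ground set only finitely many distinct radii matter. So I would first recall the result of Driemel et al.: the range space $(\mathbb{X}^d_m, \{\overline{B}(z,r,\mathbb{X}^d_m) \mid z \in \mathbb{X}^d_\ell, r \geq 0\})$ of closed Fréchet balls centered at curves of complexity at most $\ell$ has VC dimension $O(\ell^2 \log(\ell m))$ — this is the ``enabling'' result referred to earlier in the paper.

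Next I would argue that for any finite shattered set $A \subseteq \mathbb{X}^d_m$, the traces of open balls on $A$ coincide with the traces of closed balls on $A$. Concretely, for a fixed center $z \in \mathbb{X}^d_\ell$, the function $r \mapsto B(z,r,A)$ takes only finitely many values as $r$ ranges over $\mathbb{R}_{\geq 0}$ (at most $|A|+1$ of them), with the value jumping exactly at the distances $d_F(a,z)$ for $a \in A$. The same is true for closed balls. Given an open ball $B(z,r,A) = \{a \in A : d_F(a,z) < r\}$, let $r'$ be the largest value in $\{d_F(a,z) : a\in A, d_F(a,z) < r\}$; then the closed ball $\overline{B}(z,r',A)$ equals $B(z,r,A)$ (and if that set is empty, take $r'$ slightly below the minimum distance). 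Hence every open-ball range on $A$ is also a closed-ball range on $A$, so if the open-ball range space shatters $A$, so does the closed-ball range space, giving that the VC dimension of the open-ball range space is at most that of the closed-ball range space, which is $O(\ell^2 \log(\ell m))$.

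Finally I would note that the ground set in the theorem statement is $\mathbb{X}^d_m$ itself, which is infinite, but the VC dimension is defined via finite shattered subsets, so the argument above applies verbatim: any finite $A \subseteq \mathbb{X}^d_m$ shattered by $\mathbb{B}(\mathbb{X}^d_m, \mathbb{X}^d_\ell)$ is also shattered by the closed-ball ranges, and conversely the closed-ball bound transfers. I should also remark that one subtlety is that the paper works with equivalence classes of polygonal curves under reparameterization, but since $d_F$ is invariant under reparameterization this causes no difficulty: the Fréchet distance, and hence every metric ball, is well-defined on $\mathbb{X}^d_m$. The main obstacle, such as it is, is simply to state precisely the open/closed reduction and to confirm that the result of Driemel et al.~is stated (or can be read off) for the complexity-constrained setting $z \in \mathbb{X}^d_\ell$, $a \in \mathbb{X}^d_m$ with possibly $\ell \neq m$ — this is exactly the ``holds also in our setting'' remark preceding the theorem, and amounts to checking that their VC-dimension argument does not secretly assume $\ell = m$.
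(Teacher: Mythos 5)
Your reduction is a correct and genuinely different route to the same conclusion. The paper goes \emph{inside} the proof of Theorem~18 of \cite{Driemel19}: it replaces the closed Euclidean primitives used there (balls, stadiums, cylinders, hyperslabs) by their open counterparts and then asserts that the remainder of the argument is unaffected, so the $O(\ell^2\log(\ell m))$ bound carries over directly to open Fr\'echet balls. You instead treat \cite{Driemel19} as a black box and use the standard fact that, restricted to any finite ground set $A$, every open-ball trace on $A$ equals a closed-ball trace on $A$ with a suitably smaller radius; hence if $A$ is shattered by open balls it is shattered by closed balls, and the VC dimension of the open-ball range space is at most that of the closed-ball range space. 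Your route is cleaner in that it does not require re-examining the internals of the argument in \cite{Driemel19}.

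Two caveats. First, a boundary case in your open$\to$closed bridge: if $r=0$ and the chosen center is at Fr\'echet distance exactly $0$ from some point of $A$, then no closed ball at that center with nonnegative radius produces the empty trace; the empty set is still a closed-ball trace via a different, far-away center, so the shattering transfer still holds, but the recipe ``take $r'$ slightly below the minimum distance'' does not cover this case as written. Second, and more substantively, the mismatch with \cite{Driemel19} that the paper flags before the theorem is not whether $\ell=m$ is assumed (their result already allows different complexities for centers and ground curves). It is that \cite{Driemel19} permits three consecutive collinear vertices and defines $\mathbb{X}^d_m$ as curves with \emph{exactly} $m$ vertices, whereas this paper forbids collinear triples and uses \emph{at most} $m$. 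The paper bridges this by embedding its $\mathbb{X}^d_m$ into theirs via padding with collinear vertices; your black-box reduction would still need that embedding so that the closed-ball bound of \cite{Driemel19} applies on your ground set in the first place.
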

\begin{proof}
    We argue that the claim follows from Theorem 18 by Driemel et al. \cite{Driemel19}. First, in their paper polygonal curves do not need to adhere the restriction that no three consecutive vertices may be collinear and they define $\mathbb{X}^d_m$ to be the polygonal curves of exactly $m$ vertices. However, our definitions match by simulating the addition of collinear vertices to those curves in $\mathbb{X}^d_m$ with less than $m$ vertices. 
    
    Now, looking into their proof, we can slightly modify the geometric primitives by letting $B_r(p) = \{ x \in \mathbb{R}^d \mid \lVert x - p \rVert < r \}$, $D_r(\overline{st}) = \{x \in \mathbb{R}^d \mid \exists p \in \overline{st}: \lVert p - x \rVert < r\}$, $C_r(\overline{st}) = \{x \in \mathbb{R}^d \mid \exists p \in \ell(\overline{st}): \lVert p - x \rVert < r\}$ and $R_r(\overline{st}) = \{ p + u \mid p \in \overline{st}, u \in \mathbb{R}^d, \langle t - s, u \rangle = 0, \lVert u \rVert < r\}$, which does not affect the remainder of the proof and thus yields the same bound on the VC dimension.
\qed\end{proof}

To compute $\epsilon$-coresets for $(k,\ell)$-median clustering under the Fr\'echet distance, we first need to compute the sensitivities and to do so, we utilize constant factor approximation algorithms. We use \cite[Algorithm 1]{doi:10.1137/1.9781611976465.160}, which only works for $k=1$ but is very efficient in this case. For $k > 1$ we use \cref{algo:capprox}, a modification of \cite[Algorithm 3]{DBLP:conf/soda/DriemelKS16}, which we now present. This algorithm uses (approximate) minimum-error $\ell$-simplifications, which we now define.

\begin{definition}
    An $\alpha$-approximate minimum-error $\ell$-simplification of a polygonal curve $\tau \in \mathbb{X}^d$ is a curve $\sigma \in \mathbb{X}^d_\ell$ with $d_F(\tau, \sigma) \leq \alpha \cdot d_F(\tau, \sigma^\prime)$ for all $\sigma^\prime \in \mathbb{X}^d_\ell$.
\end{definition}

The following lemma is useful to obtain simplifications.

\begin{lemma}[{\cite[Lemma 7.1]{k_l_center}}]
    \label{lem:imai_iri_simpli}
    Given a curve $\sigma \in \mathbb{X}^d_m$, a $4$-approximate minimum-error $\ell$-simplification can be computed in $O(m^3 \log m)$ time, by combining the algorithms by Alt and Godau \cite{alt_godau} and Imai and Iri \cite{imai_polygonal_1988}.
\end{lemma}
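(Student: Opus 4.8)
The final statement is \cref{lem:imai_iri_simpli}, which cites results of Alt--Godau and Imai--Iri; the plan is to sketch how combining these two classical algorithms yields a $4$-approximate minimum-error $\ell$-simplification of a curve $\sigma \in \mathbb{X}^d_m$ in $O(m^3 \log m)$ time. First I would recall the setup: a minimum-error $\ell$-simplification is a curve $\sigma' \in \mathbb{X}^d_\ell$ minimizing $d_F(\sigma, \sigma')$ over all such curves; we want to return one whose error is within a factor $4$ of this optimum. The key structural idea is to restrict attention to \emph{vertex-restricted} simplifications, i.e.\ curves whose vertices form an ordered subsequence of $\sigma$'s vertices $v_1, \dots, v_m$ (with $v_1$ and $v_m$ always kept). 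A standard argument shows that the best vertex-restricted simplification using at most $\ell$ vertices has Fréchet distance to $\sigma$ at most a constant factor (here, the factor that combines with the Alt--Godau decision-procedure slack to give $4$) times that of the unrestricted optimum $\sigma'$: indeed, snapping each vertex of $\sigma'$ to a nearby vertex of $\sigma$, or rather matching consecutive portions of $\sigma$ to the edges of $\sigma'$, produces a subsequence of $\sigma$'s vertices whose induced polygonal curve is not much worse.

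Next I would describe the algorithmic components. The Imai--Iri framework reduces finding an optimal vertex-restricted simplification to a shortest-path (or reachability) computation in the \emph{shortcut graph}: build a directed graph on vertices $\{v_1, \dots, v_m\}$ with an edge $(v_i, v_j)$, $i < j$, whenever the single segment $\overline{v_i v_j}$ is a "good shortcut" for the subcurve $\sigma[t_i, t_j]$, meaning $d_F(\overline{v_i v_j}, \sigma[t_i, t_j])$ is at most the current error threshold. Each such edge can be tested by the Alt--Godau decision procedure in $O(m \log m)$ time (it decides, given two curves and a value $r$, whether their Fréchet distance is at most $r$), and there are $O(m^2)$ candidate edges, giving $O(m^3 \log m)$ to build the graph for a fixed threshold; a breadth-first search then finds the minimum number of segments needed, or equivalently, for fixed budget $\ell$, whether a threshold is feasible. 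To turn this into an error-minimizer one binary-searches over the $O(m^2)$ candidate critical values of the Fréchet distance (the pairwise distances and the values arising from the decision procedure's critical events); the Alt--Godau machinery supplies exactly these candidates. Since Alt--Godau's decision procedure itself only decides $d_F \le r$ exactly but the optimal simplification error need not be among a clean finite set when combined with the parametric search, the factor-$2$ slack from using the decision procedure as a $(1+1)$-approximation of the per-edge Fréchet distance, multiplied by the constant-factor loss from restricting to vertex-restricted simplifications, is what yields the overall factor $4$.

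Putting it together: run the shortcut-graph construction with the decision procedure, binary-search the threshold over the polynomially many critical values to find the smallest threshold $r$ for which a vertex-restricted simplification with at most $\ell$ vertices exists, and output the corresponding curve $\sigma$. Correctness follows because (i) the returned curve has error at most $r$ by construction, and (ii) $r$ is at most $4\,d_F(\sigma, \sigma')$ for the unrestricted optimum $\sigma'$ by the snapping argument combined with the decision-procedure slack. The running time is dominated by $O(m^2)$ edge tests at $O(m \log m)$ each, repeated over $O(\log m)$ binary-search steps, i.e.\ $O(m^3 \log^2 m)$ naively; a more careful organization (precomputing all per-edge critical radii once, then doing the parametric search combinatorially) removes the extra $\log m$ and gives $O(m^3 \log m)$, which is the claimed bound. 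The main obstacle in writing this carefully is the approximation-factor bookkeeping: one has to be precise about where each factor of $2$ comes from (vertex-restriction versus the exact-versus-approximate per-edge distance) and verify they compose to exactly $4$ rather than a larger constant; the graph-search and parametric-search parts are routine. Since the lemma is quoted verbatim from \cite[Lemma 7.1]{k_l_center}, I would in fact simply invoke that reference for the full details.
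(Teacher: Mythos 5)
The paper does not supply its own proof of this lemma; it imports it verbatim from \cite[Lemma 7.1]{k_l_center}, so there is no in-text argument to compare against beyond the citation itself. Your reconstruction of the algorithmic skeleton is broadly right: build the Imai--Iri shortcut graph, weight each candidate shortcut $\overline{v_iv_j}$ by the Fr\'echet distance from it to the subcurve $\sigma[t_i,t_j]$ (computed once with Alt--Godau in $O(m\log m)$ time per pair, $O(m^3\log m)$ total), and then find a bottleneck path using at most $\ell-1$ edges; the running-time accounting, including your remark that the per-edge critical radii should be precomputed once rather than re-decided inside a parametric search, matches the claimed bound.

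Where your sketch goes astray is in the factor-$4$ bookkeeping. You attribute one factor of $2$ to ``decision-procedure slack,'' but Alt and Godau's decision procedure is \emph{exact}: given a threshold $r$ it decides $d_F \le r$ without any approximation loss, and the free-space critical events let one recover the exact per-edge Fr\'echet distance. So there is no factor-$2$ contributed by the decision procedure. In the cited source the entire factor of $4$ comes from the structural loss of restricting to vertex-restricted simplifications whose shortcut errors are measured piecewise: one shows that if the unrestricted optimum $\sigma'\in\mathbb{X}^d_\ell$ has $d_F(\sigma,\sigma')=r^\ast$, then there is an ordered vertex subsequence of $\sigma$ of length at most $\ell$ whose induced curve $\sigma''$ has every shortcut segment within Fr\'echet distance $4r^\ast$ of the corresponding subcurve, and hence $d_F(\sigma,\sigma'')\le 4r^\ast$. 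That snapping-and-triangle-inequality argument, not the decision procedure, is the source of the constant, and it would be the nontrivial part to verify if you were to write the proof in full rather than cite it.
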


We now present the constant factor approximation algorithm.

\begin{algorithm}[H]
	\caption{Constant Factor Approximation for \((k,\ell)\)-Median Clustering}\label{algo:capprox}
	\begin{algorithmic}[1]
		\Procedure{\((k,\ell)\)-Median-96-Approximation}{$T = \{\tau_1, \dots, \tau_n\}$}
			\For{\(i=1, \dots, n\)}
				\State \(\hat{\tau}_i \gets\) approximate minimum-error \(\ell\)-simplification of \(\tau_i\)
			\EndFor
			\State \(C \gets\) Chen's algorithm with \(\epsilon = 0.5, \lambda = \delta\) on \(\{\hat{\tau}_1, \dots, \hat{\tau}_n \}\) \cite[Theorem 6.2]{chen_coresets_2009}
			\State \Return \(C\)
		\EndProcedure
	\end{algorithmic}
\end{algorithm}

We prove the correctness and analyze the running time of \cref{algo:capprox}.

\begin{theorem}
    \label{theo:capprox_frechet}
    Given $\delta \in (0,1)$ and $T = \{ \tau_1, \dots, \tau_n \} \subset \mathbb{X}^d_m$, \cref{algo:capprox} returns with probability at least $1-\delta$ a $109$-approximate $(k,\ell)$-median solution for $T$ in time $O(n m \log(1/\delta)\log(m) + n m^3 \log(m))$.
\end{theorem}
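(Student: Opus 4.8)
The plan is to read \cref{algo:capprox} as a reduction from $(k,\ell)$-median clustering of the complexity-$m$ curves in $T$ to ordinary (discrete) metric $k$-median clustering of the $n$ simplified curves $\hat\tau_1,\dots,\hat\tau_n$, each of complexity at most $\ell=O(1)$, and to control the loss incurred by this reduction via the triangle inequality. Throughout I work in $\mathbb{X}^d$, so that $d_F$ is a genuine metric; let $C^\ast\subseteq\mathbb{X}^d_\ell$ be an optimal $(k,\ell)$-median for $T$, put $\mathrm{opt}_i=\min_{c\in C^\ast}d_F(\tau_i,c)$ and $\mathrm{OPT}=\cost{T}{C^\ast}=\sum_i\mathrm{opt}_i$, and write $\hat T=\{\hat\tau_1,\dots,\hat\tau_n\}$.

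The one geometric ingredient is \cref{lem:imai_iri_simpli}: because $\hat\tau_i$ is a $4$-approximate minimum-error $\ell$-simplification of $\tau_i$ and $C^\ast\subseteq\mathbb{X}^d_\ell$, we get $d_F(\tau_i,\hat\tau_i)\le 4\,d_F(\tau_i,\sigma)$ for every $\sigma\in\mathbb{X}^d_\ell$, in particular $d_F(\tau_i,\hat\tau_i)\le 4\,\mathrm{opt}_i$, so $\sum_i d_F(\tau_i,\hat\tau_i)\le 4\,\mathrm{OPT}$. Two consequences follow by the triangle inequality. First, the optimal $(k,\ell)$-median cost of the simplified instance is small: assigning $\hat\tau_i$ to the center $c^\ast_i\in C^\ast$ nearest $\tau_i$ gives $\cost{\hat T}{C^\ast}\le\sum_i\bigl(d_F(\hat\tau_i,\tau_i)+\mathrm{opt}_i\bigr)\le 5\,\mathrm{OPT}$, hence the optimum of $\hat T$ over all $k$-subsets of $\mathbb{X}^d_\ell$ is at most $5\,\mathrm{OPT}$. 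Second, for an arbitrary $k$-set $C$ of curves, $\cost{T}{C}\le\sum_i\bigl(d_F(\tau_i,\hat\tau_i)+\min_{c\in C}d_F(\hat\tau_i,c)\bigr)\le 4\,\mathrm{OPT}+\cost{\hat T}{C}$.

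Next I would apply Chen's algorithm~\cite[Theorem 6.2]{chen_coresets_2009} to the finite metric $\hat T$ with parameters $\epsilon=1/2$ and $\lambda=\delta$: with probability at least $1-\delta$ it returns a $k$-set $C\subseteq\hat T\subseteq\mathbb{X}^d_\ell$ with $\cost{\hat T}{C}\le c_0\cdot\mathrm{OPT}^{\mathrm{disc}}$, where $\mathrm{OPT}^{\mathrm{disc}}$ is the best cost achievable with centers drawn from $\hat T$ and, for $\epsilon=1/2$, $c_0=\frac{(3+\epsilon)(1+\epsilon)}{1-\epsilon}=\frac{21}{2}$ — the $(1\pm\epsilon)$-coreset contributing the factor $\frac{1+\epsilon}{1-\epsilon}$ and the inner $(3+\epsilon)$-approximation for discrete metric $k$-median contributing $(3+\epsilon)$. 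Using the standard observation that restricting $k$-median centers to the point set itself costs at most a factor $2$, together with the first consequence above, $\mathrm{OPT}^{\mathrm{disc}}\le 2\cdot 5\,\mathrm{OPT}=10\,\mathrm{OPT}$; therefore $\cost{\hat T}{C}\le\frac{21}{2}\cdot 10\,\mathrm{OPT}=105\,\mathrm{OPT}$, and the second consequence yields $\cost{T}{C}\le 4\,\mathrm{OPT}+105\,\mathrm{OPT}=109\,\mathrm{OPT}$. Since $C\subseteq\mathbb{X}^d_\ell$ and $\lvert C\rvert=k$, it is a feasible $(k,\ell)$-median for $T$, which establishes the approximation guarantee.

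For the running time, computing each $\hat\tau_i$ costs $O(m^3\log m)$ by \cref{lem:imai_iri_simpli}, so the simplification step takes $O(nm^3\log m)$; running Chen's algorithm on $\hat T$ is near-linear in $n$ once one notes that every Fr\'echet-distance evaluation among curves of complexity $O(\ell)=O(1)$ takes $O(1)$ time, and — together with the $\Theta(\log(1/\delta))$ overhead for reaching failure probability $\delta$ and the cost of evaluating Fr\'echet distances between the complexity-$m$ input curves and the $O(1)$-complexity centers (at $O(m\log m)$ each by Alt--Godau) — this accounts for the remaining $O(nm\log(1/\delta)\log m)$ term; I would treat the precise bookkeeping of these factors as routine. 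I expect the \emph{main obstacle} to be exactly this constant bookkeeping: one must be careful which optimum each inequality is measured against (centers in $C^\ast$, centers anywhere in $\mathbb{X}^d_\ell$, or centers in $\hat T$), note that the simplification's $4$-approximation is invoked twice, and observe that, since the simplification step is deterministic, the $1-\delta$ failure probability is inherited entirely from Chen's algorithm.
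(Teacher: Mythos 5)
Your proof follows essentially the same route as the paper's: the $4$-approximate simplification gives $\sum_i d_F(\tau_i,\hat\tau_i)\le 4\,\mathrm{OPT}$, the standard factor-$2$ discretization and Chen's $10.5$-factor give $\cost{\hat T}{C}\le 21\,\cost{\hat T}{\hat C^\ast}\le 21\cdot 5\,\mathrm{OPT}$, and the triangle inequality assembles these into $4+105=109$, which is exactly the paper's bookkeeping. The only cosmetic difference is that you bound everything against $C^\ast$ directly rather than naming the intermediate optimum $\hat C^\ast$ of the simplified instance, and your explanation of the $O(nm\log(1/\delta)\log m)$ term invokes distance evaluations between the original complexity-$m$ curves and the centers, which do not actually occur inside Chen's algorithm on $\hat T$ (the paper's own attribution is similarly loose), but this does not affect the stated upper bound.
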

\begin{proof}
    We assume that the approximate minimum-error \par $\ell$-simplifications are computed combining the algorithms by Alt and Godau \cite{alt_godau} and Imai and Iri \cite{imai_polygonal_1988}, so the approximation factor is $4$ by \cref{lem:imai_iri_simpli}. 
    
    For $i \in [n]$, let $\hat{\tau}_i$ be the simplification of $\tau_i$ and let $\hat{T} = \{ \hat{\tau}_1, \dots, \hat{\tau}_n \}$. Note that $d_F(\tau_i,\hat{\tau}_i) \leq 4 \cdot d_F(\tau, \sigma)$ for all $i \in [n]$ and $\sigma \in \mathbb{X}^d_\ell$.
    
    Let $\hat{C}^\ast = \{ \hat{c}^\ast_1, \dots, \hat{c}^\ast_k \} \subset \mathbb{X}^d_\ell$ be an optimal $(k,\ell)$-median solution for $\hat{T}$ and let $C^\prime = \{c^\prime_1, \dots, c^\prime_k\} \subseteq \hat{T}$ be an optimal solution to the discrete $(k,\ell)$-median problem for $\hat{T}$, i.e. the centers are chosen among the input. Breaking ties arbitrarily, we assume that every $\hat{\tau} \in \hat{T}$ has a unique nearest neighbor in $\hat{C}^\ast$ and for $i \in [k]$ we define $\hat{T}_i = \{ \hat{\tau} \in \hat{T} \mid \forall j \in [k] : d_F(\hat{\tau}, \hat{c}^\ast_i) \leq d_F(\hat{\tau}, \hat{c}^\ast_j) \}$, such that $\hat{T}_1, \dots, \hat{T}_k$ form a partition of $\hat{T}$. By the triangle inequality:
    \begin{align*}
        \cost{\hat{T}}{C^\prime} & = \min_{\substack{C \subseteq \hat{T} \\ \lvert C \rvert = k }} \sum_{i=1}^k \sum_{\hat{\tau} \in \hat{T}_i} \min_{c \in C} d_F(\hat{\tau}, c) \leq \min_{\substack{C \subseteq \hat{T} \\ \lvert C \rvert = k }} \sum_{i=1}^k \sum_{\hat{\tau} \in \hat{T}_i} (d_F(\hat{\tau}, \hat{c}^\ast_i) + \min_{c \in C} d_F(\hat{c}^\ast_i, c)) \\
        & = \cost{\hat{T}}{\hat{C}^\ast} + \min_{\substack{C \subseteq \hat{T} \\ \lvert C \rvert = k }} \sum_{i=1}^k \sum_{\hat{\tau} \in \hat{T}_i}  \min_{c \in C} d_F(\hat{c}^\ast_i, c) \\
        & = \cost{\hat{T}}{\hat{C}^\ast} + \sum_{i=1}^k \sum_{\hat{\tau} \in \hat{T}_i} \min_{\hat{\sigma} \in \hat{T}_i} d_F(\hat{c}^\ast_i, \hat{\sigma}).
    \end{align*}
    For each $i \in [k]$ there must exist a $\hat{\sigma} \in \hat{T}_i$ with $d_F(\hat{\sigma}, \hat{c}^\ast_i) \leq \sum_{\hat{\tau} \in \hat{T}_i} d_F(\hat{\tau}, \hat{c}^\ast_i)/\lvert \hat{T}_i \rvert$, since otherwise $\sum_{i=1}^k \sum_{\hat{\tau} \in \hat{T}_i} \min_{\hat{\sigma} \in \hat{T}_i} d_F(\hat{c}^\ast_i, \hat{\sigma}) > \cost{\hat{T}}{\hat{C}^\ast}$, which is a contradiction. We conclude that $\cost{\hat{T}}{C^\prime} \leq 2 \cost{\hat{T}}{\hat{C}^\ast}$. Also, by \cite[Theorem 6.2]{chen_coresets_2009} $\cost{\hat{T}}{C} \leq 10.5 \cost{\hat{T}}{C^\prime}$. 
    
    Now, let $C^\ast = \{c^\ast_1, \dots, c^\ast_k \} \subset \mathbb{X}^d_\ell$ be an optimal $(k,\ell)$-median solution for $T$ and $C = \{c_1, \dots, c_k\}$ be a solution returned by \cref{algo:capprox} for $\hat{T}$. We derive:
    \begin{align*}
        \cost{T}{C} & \leq \sum_{i=1}^n (d_F(\tau_i, \hat{\tau}_i) + \min_{j \in [k]} d_F(\hat{\tau}_i, c_j)) = \sum_{i=1}^n d_F(\tau_i, \hat{\tau}_i) + \cost{\hat{T}}{C} \\
        & \leq 4 \cost{T}{C^\ast} + 21 \cost{\hat{T}}{\hat{C}^\ast} \leq 4 \cost{T}{C^\ast} + 21 \cost{\hat{T}}{C^\ast} \\
        & \leq 4 \cost{T}{C^\ast} + 21 \left( \sum_{i=1}^n (d_F(\hat{\tau}_i, \tau_i) + \min_{j \in [k]} d_F(\tau_i, c^\ast_j)) \right) \\
        & \leq 4 \cost{T}{C^\ast} + 84 \cost{T}{C^\ast} + 21 \cost{T}{C^\ast} \leq 109 \cost{T}{C^\ast}.
    \end{align*}

    Computing the simplifications takes time $O(n m^3 \log(m))$, see \cref{lem:imai_iri_simpli}. Further, we incorporate the given probability of failure (see \cite[Theorem 3.6]{chen_coresets_2009}) into the running time stated in \cite[Theorem 6.2]{chen_coresets_2009}. Hence, Chen's algorithm can be run in time $O(n m \log(1/\delta) \log(m))$ when the distances are computed using Alt and Godau's algorithm \cite{alt_godau}.
\qed\end{proof}

We now present the algorithm for computing weighted $\epsilon$-coresets for $(k,\ell)$-median clustering.

\begin{algorithm}[H]
	\caption{Coresets for \((k,\ell)\)-Median Clustering\label{algo:coreset}}
	\begin{algorithmic}[1]
		\Procedure{\((k, \ell)\)-Median-Coreset}{$T = \{\tau_1, \dots, \tau_n\}, \delta, \epsilon$}
			\If{\(k=1\)}
				\State \(\hat{c} \gets\) \(\ell\)-Median-\(34\)-Approximation\((T,\delta/2)\) \cite[Algorithm 1]{doi:10.1137/1.9781611976465.160}
				\State \(\hat{C} = \{ \hat{c} \}\)
			\Else
				\State \(\hat{C} = \{\hat{c}_1, \dots, \hat{c}_k\} \gets \) \cref{algo:capprox}\((T, \delta/2)\)
			\EndIf
			\State compute \(\hat{V}_1, \dots, \hat{V}_k\), \(\hat{\Delta}_1, \dots, \hat{\Delta}_k\) and \(\gamma\) w.r.t. \(\hat{C}\) (\confer \cref{lem:sensitivities})
			\State compute \(\lambda\), \(\Lambda\) w.r.t. \(\gamma\) and \(\psi\) w.r.t. \(\lambda\) (\confer \cref{theo:coreset})
			\State \(S \gets\) sample \(\Theta(k\epsilon^{-2}(d^2 \ell^2 k \log(d \ell m) \log(kn) \log^2(k) + \log(1/(2\delta))))\) 
			
			\hspace{\algorithmicindent} elements from \(T\) independently with replacement with respect to \(\psi\)
			\State compute \(w\) w.r.t. \(\lambda\), \(\Lambda\) and \(S\) (\confer \cref{theo:coreset})
			\State \Return \(S\) and \(w\)
		\EndProcedure
	\end{algorithmic}
\end{algorithm}

We prove the correctness and analyze the running time of \cref{algo:coreset}. Also, we analyze the size of the resulting $\epsilon$-coreset.

\begin{theorem}
    \label{theo:kl_coreset}
    Given a set $T = \{ \tau_1, \dots, \tau_n \} \subset \mathbb{X}^d_m$ and $\delta, \epsilon \in (0,1)$, \cref{algo:coreset} computes a weighted $\epsilon$-coreset of size $O(\epsilon^{-2}(\log(m) \log(n) + \log(1/\delta)))$ for $(k, \ell)$-median clustering with probability at least $1-\delta$, in time $$O(n m \log(m) \log(1/\delta) + n m^3 \log (m) + \epsilon^{-2}(\log(m)\log(n) + \log(1/\delta)))$$ for $k > 1$ and $$O(n m \log(m) + m^2 \log(m) \log^2(1/\delta) + m^3 \log(m) + \epsilon^{-2}(\log(m)\log(n) + \log(1/\delta)))$$ for $k = 1$.
\end{theorem}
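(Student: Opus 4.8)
The plan is to obtain \cref{theo:kl_coreset} as a direct instantiation of the general framework of \cref{sec:coresets_metric}, specialized to the metric space $(\mathbb{X}^d, d_F)$ with input domain $Y = \mathbb{X}^d_m$ and center domain $Z = \mathbb{X}^d_\ell$. All the substantive pieces are already available: \cref{lem:sensitivities} converts any bi-criteria approximate $(k,\ell)$-median solution into the sensitivity upper bounds $\gamma$ with $\Gamma = O(\alpha k')$; \cref{theo:coreset} turns $\gamma$ (via $\lambda,\Lambda,\psi,w$) into a weighted sample of prescribed size that is an $\epsilon$-coreset with probability $1-\delta$; and \cref{theo:open_frechet_ball_vc} provides the VC dimension $\mathcal{D} = O(\ell^2\log(\ell m)) = O(d^2\ell^2\log(d\ell m))$ of the range space $(\mathbb{X}^d_m,\mathbb{B}(\mathbb{X}^d_m,\mathbb{X}^d_\ell))$ that \cref{theo:coreset} requires. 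So the proof reduces to checking that \cref{algo:coreset} executes exactly these steps and to bookkeeping probabilities, sizes, and running times.

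First I would settle the approximate solution used as the pivot. For $k>1$, \cref{algo:capprox} called with failure parameter $\delta/2$ returns, by \cref{theo:capprox_frechet}, a $109$-approximate $(k,\ell)$-median solution $\hat{C}$ with $|\hat{C}| = k$ with probability at least $1-\delta/2$; for $k=1$, \cite[Algorithm 1]{doi:10.1137/1.9781611976465.160} returns a $34$-approximate solution $\hat{C} = \{\hat{c}\}$ with probability at least $1-\delta/2$. In either case \cref{lem:sensitivities} applies with $k' = k$ and $\alpha = O(1)$; crucially it only requires $\hat{C} \subseteq X = \mathbb{X}^d$, not $\hat{C} \subseteq Z$, which matters because $\hat{c}$ may have complexity up to $2\ell-2 > \ell$. (The degenerate case $\hat{\Delta} = 0$, i.e.\ optimal cost zero, is handled trivially.) Conditioned on the approximation succeeding, the quantities $\gamma,\lambda,\Lambda,\psi,w$ produced by \cref{algo:coreset} coincide with those of \cref{theo:coreset}, and the sample size in the sampling step of \cref{algo:coreset} is, after substituting $\mathcal{D} = O(d^2\ell^2\log(d\ell m))$, $\alpha = O(1)$ and $k' = k$, precisely the size demanded by \cref{theo:coreset} for failure probability $\delta/2$; hence $(S,w)$ is a weighted $\epsilon$-coreset (\cref{def:coreset}) with probability at least $1-\delta/2$. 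A union bound over the two failure events yields probability at least $1-\delta$, and for constant $k,\ell,d$ the sample, and hence the coreset, has size $O(\epsilon^{-2}(\log m\log n + \log(1/\delta)))$.

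For the running time I would sum the phases of \cref{algo:coreset}. Computing $\hat{C}$ costs $O(nm\log(1/\delta)\log m + nm^3\log m)$ for $k>1$ by \cref{theo:capprox_frechet}, and $O(nm\log m + m^2\log m\log^2(1/\delta) + m^3\log m)$ for $k=1$ by the stated running time of \cite[Algorithm 1]{doi:10.1137/1.9781611976465.160}. Computing the Voronoi partition $\hat{V}_1,\dots,\hat{V}_k$ and the cell costs $\hat{\Delta}_i$ needs at most $kn$ Fréchet distances between a curve of complexity $\le m$ and a center of complexity $O(\ell)$, each in $O(m\ell\log(m\ell)) = O(m\log m)$ by the algorithm of Alt and Godau \cite{alt_godau}, hence $O(nm\log m)$ overall since $k = O(1)$; computing $\gamma,\lambda,\Lambda,\psi$ and the weights $w$ then takes $O(n + |S|)$ time, and drawing the $|S|$ samples from $\psi$ takes $O(n)$ preprocessing (e.g.\ the alias method) plus $O(|S|)$ time. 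Summing and absorbing lower-order terms gives the two claimed bounds, with $|S| = O(\epsilon^{-2}(\log m\log n + \log(1/\delta)))$.

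Most of the work lives in the cited results, so I do not expect a genuine obstacle; the step that needs the most care is the parameter bookkeeping — splitting $\delta$ as $\delta/2$ between the approximation subroutine and the sampling step, and checking that the constant $\alpha$, the value $k' = k$, and the VC bound of \cref{theo:open_frechet_ball_vc} combine to give exactly the sample size hard-coded in \cref{algo:coreset}. The one conceptual subtlety worth flagging is that \cref{lem:sensitivities} requires only $\hat{C} \subseteq X$ rather than $\hat{C} \subseteq Z$, which is what permits using the $k=1$ approximation whose output may exceed complexity $\ell$.
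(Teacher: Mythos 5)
Your proposal is correct and follows essentially the same route as the paper's proof: instantiate the generalized $k$-median framework with $X = \mathbb{X}^d$, $Y = \mathbb{X}^d_m$, $Z = \mathbb{X}^d_\ell$, feed the constant-factor approximation into \cref{lem:sensitivities} and \cref{theo:coreset} with the VC bound from \cref{theo:open_frechet_ball_vc}, split $\delta$ between the two randomized steps, and sum the phase running times. Your explicit remark that \cref{lem:sensitivities} only needs $\hat{C} \subseteq X$ rather than $\hat{C} \subseteq Z$ (accommodating pivot centers of complexity up to $2\ell-2$) is a correct and useful clarification that the paper leaves implicit in the lemma's hypotheses.
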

\begin{proof}
    First note that by a union bound, with probability at least $1-\delta$ \cref{algo:capprox}, respectively \cite[Algorithm 1]{doi:10.1137/1.9781611976465.160}, and the sampling are simultaneously successful (see \cref{theo:capprox_frechet,theo:coreset} and \cite[Corollary 3.1]{doi:10.1137/1.9781611976465.160}). The correctness of the algorithm follows from the observations that the $(k,\ell)$-median clustering objective fits the generalized $k$-median clustering objective with $X = \mathbb{X}^d$, $Y = \mathbb{X}^d_m \subset X$ and $Z = \mathbb{X}^d_\ell \subset X$, therefore \cref{lem:sensitivities} and \cref{theo:coreset} can be applied, and the VC dimension of $(X, \mathbb{B}(Y,Z))$ is $O(\ell^2 \log(\ell m))$ by \cref{theo:open_frechet_ball_vc}.
    
    We now analyze the running time. $\hat{V}_1, \dots, \hat{V}_k$, $\hat{\Delta}_1, \dots, \hat{\Delta}_k$ and $\gamma$ can be computed in time $O(n m \log(m))$ using Alt and Godau's algorithm \cite{alt_godau}. $\lambda$, $\Lambda$ and $\psi$ can be computed in time $O(n)$ and the sampling can be carried out in time $O(\epsilon^{-2}(\log(m)\log(n) + \log(1/\delta)))$. Finally, $w$ can be computed in time $O(n)$. If $k > 1$ we run \cref{algo:capprox} in time $O(n m \log(1/\delta) \log(m) + n m^3 \log(m))$, see \cref{theo:capprox_frechet}. Else, we run \cite[Algorithm 1]{doi:10.1137/1.9781611976465.160} in time $(m^2 \log(m) \log^2(1/\delta) + m^3 \log(m))$, see \cite[Corollary 3.1]{doi:10.1137/1.9781611976465.160}.
    All in all, the running time is then $$O(n m \log(1/\delta) \log(m) + n m^3 \log (m) + \epsilon^{-2}(\log(m)\log(n) + \log(1/\delta)))$$ for $k > 1$ and $$O(n m \log(m) + m^2 \log(m) \log^2(1/\delta) + m^3 \log(m) + \epsilon^{-2}(\log(m) + \log(1/\delta)))$$ for $k=1$.
\qed\end{proof}

\section{Towards Practical $(1,\ell)$-Median Approximation Algorithms}
\label{sec:approximation_improvement}

In this section, we present a modification of Algorithm 3 from \cite{arxiv_full}. Our modification uses $\epsilon$-coresets to improve the running time of the algorithm, rendering it more tractable in a big data setting. We start by giving some definitions. For $p \in \mathbb{R}^d$ and $r \in \mathbb{R}_{\geq 0}$ we denote by $B(p,r) = \{ q \in \mathbb{R}^d \mid \lVert p - q \rVert \leq r\}$ the closed Euclidean ball of radius $r$ with center $p$. We give a standard definition of grids.

\begin{definition}[grid]
    \label{def:grid}
    Given a number $r \in \mathbb{R}_{>0}$, for $(p_1, \dots, p_d) \in \mathbb{R}^d$ we define by $G(p,r) = (\lfloor p_1 / r \rfloor \cdot r, \dots, \lfloor p_d / r \rfloor \cdot r)$ the $r$-grid-point of $p$. Let $P \subseteq \mathbb{R}^d$ be a subset of $\mathbb{R}^d$. The grid of cell width $r$ that covers $P$ is the set $\mathbb{G}(P,r) = \{G(p,r) \mid p \in P\}$.
\end{definition}
Such a grid partitions the set $P$ into cubic regions and for each $r \in \mathbb{R}_{>0}$ and $p \in P$ we have that $\lVert p - G(p,r) \rVert \leq \sqrt{d}r$. The following theorem by Indyk~\cite{Indyk00} is useful for evaluating the cost of a curve at hand.

\begin{theorem}[{\cite[Theorem 31]{Indyk00}}]
	\label{theo:indyk_median}
	Let $\epsilon \in (0,1]$ and $T \subset \mathbb{X}^d$ be a set of polygonal curves. Further let $W$ be a non-empty sample, drawn uniformly and independently at random from $T$, with replacement. For $\tau, \sigma \in T$ with $\cost{T}{\tau} > (1+\epsilon) \cost{T}{\sigma}$ it holds that $\Pr[\cost{W}{\tau} \leq \cost{W}{\sigma}] < \exp\left( - {\epsilon^2 \lvert W \rvert}/{64} \right)$.
\end{theorem}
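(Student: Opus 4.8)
The plan is to express $\cost{W}{\tau} = \sum_{w \in W} d_F(w,\tau)$ (the cost of a single curve) as a sum of $\lvert W \rvert$ i.i.d.\ contributions and apply Hoeffding's inequality to a normalized version of the gap between $\sigma$ and $\tau$. Write $W = \{w_1, \dots, w_s\}$ with $s = \lvert W \rvert$, each $w_j$ drawn uniformly and independently from $T$. First I would observe that the hypothesis $\cost{T}{\tau} > (1+\epsilon)\cost{T}{\sigma}$ forces $d_F(\sigma,\tau) > 0$ (otherwise $\sigma$ and $\tau$ are equivalent and the two costs coincide, so the strict inequality fails), hence I may define
\[
  Z_j = \frac{d_F(w_j,\sigma) - d_F(w_j,\tau)}{d_F(\sigma,\tau)} \in [-1,1],
\]
which are i.i.d.\ and bounded by the triangle inequality. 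Since $d_F(\sigma,\tau) > 0$, the event $\{\cost{W}{\tau} \le \cost{W}{\sigma}\}$ is exactly $\bigl\{\sum_{j=1}^s Z_j \ge 0\bigr\}$.

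Next I would compute and lower-bound $-\expected{Z_1} = \bigl(\cost{T}{\tau} - \cost{T}{\sigma}\bigr)/\bigl(n\, d_F(\sigma,\tau)\bigr) > 0$, where $n = \lvert T \rvert$. Summing the triangle inequality $d_F(\sigma,\tau) \le d_F(\rho,\sigma) + d_F(\rho,\tau)$ over all $\rho \in T$ gives $n\, d_F(\sigma,\tau) \le \cost{T}{\tau} + \cost{T}{\sigma}$; substituting this and using that $x \mapsto (x-1)/(x+1)$ is increasing together with $\cost{T}{\tau} > (1+\epsilon)\cost{T}{\sigma}$ yields $-\expected{Z_1} > \epsilon/(2+\epsilon) \ge \epsilon/3$, the last step using $\epsilon \le 1$. (The degenerate case $\cost{T}{\sigma} = 0$ is handled in the same way via $n\,d_F(\sigma,\tau) \le \cost{T}{\tau}$ and yields an even larger bound.)

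Finally I would invoke Hoeffding's inequality for the i.i.d.\ variables $Z_j \in [-1,1]$:
\[
  \Pr\!\left[\sum_{j=1}^s Z_j \ge 0\right]
  = \Pr\!\left[\sum_{j=1}^s Z_j - s\,\expected{Z_1} \ge -s\,\expected{Z_1}\right]
  \le \exp\!\left(-\frac{2\,\bigl(s(-\expected{Z_1})\bigr)^2}{4s}\right)
  = \exp\!\left(-\frac{s}{2}\,\expected{Z_1}^2\right),
\]
and then $-\expected{Z_1} > \epsilon/3$ gives the bound $\exp(-s\epsilon^2/18) < \exp(-s\epsilon^2/64)$, which is the claim. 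The only delicate point is the normalization in the first step: dividing by $d_F(\sigma,\tau)$ makes the summands bounded but produces an unbounded-looking denominator in $\expected{Z_1}$, and the triangle-inequality estimate $n\, d_F(\sigma,\tau) \le \cost{T}{\tau} + \cost{T}{\sigma}$ is exactly what converts that denominator into the cost ratio and recovers a clean absolute constant; the rest is a routine Hoeffding computation.
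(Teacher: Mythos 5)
The paper does not actually prove this statement; it is imported verbatim as Theorem 31 from Indyk's work and used as a black box. So there is no in-paper proof to compare against, and your proposal should be judged on its own.

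Your argument is correct and is essentially the canonical proof of Indyk's lemma. The normalization $Z_j = (d_F(w_j,\sigma) - d_F(w_j,\tau))/d_F(\sigma,\tau)$ is exactly the right move: the triangle inequality confines each $Z_j$ to $[-1,1]$, the event $\cost{W}{\tau}\le\cost{W}{\sigma}$ is precisely $\sum_j Z_j \ge 0$, and the pointwise triangle inequality $d_F(\sigma,\tau)\le d_F(\rho,\sigma)+d_F(\rho,\tau)$, summed over $\rho\in T$, converts the unwieldy denominator $n\,d_F(\sigma,\tau)$ into $\cost{T}{\tau}+\cost{T}{\sigma}$, from which monotonicity of $x\mapsto(x-1)/(x+1)$ and the hypothesis $\cost{T}{\tau}>(1+\epsilon)\cost{T}{\sigma}$ give $-\pexpected[Z_1]>\epsilon/(2+\epsilon)\ge\epsilon/3$. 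You also correctly disposed of both degenerate cases ($d_F(\sigma,\tau)=0$ is impossible under the hypothesis since $d_F$ is a pseudo-metric on equivalence classes, and $\cost{T}{\sigma}=0$ gives an even stronger bound). The Hoeffding application with range width $2$ yields $\exp(-s\,\pexpected[Z_1]^2/2)\le\exp(-s\epsilon^2/18)$, which is strictly stronger than the stated $\exp(-s\epsilon^2/64)$, so the claimed inequality (and its strictness) follows. One small remark: the argument never uses anything specific to the Fréchet distance beyond the triangle inequality and non-negativity, so, as you implicitly observed, it holds in any pseudo-metric space — which is consistent with Indyk proving it in that generality.
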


The following theorem, which we combine with fine-tuned grids, allows us to obtain low-complexity center curves.

\begin{lemma}[{\cite[Lemma 4.1]{arxiv_full}}]
    \label{lem:shortcutting_single}
    Let $\sigma, \tau \in \mathbb{X}^d$ be polygonal curves. Let $v^\tau_1, \dots, v^\tau_{\lvert \tau \rvert}$ be the vertices of $\tau$ and let $r = d_F(\sigma, \tau)$. There exists a polygonal curve $\sigma^\prime \in \mathbb{X}^d$ with every vertex contained in at least one of $B(v^\tau_1, r), \dots, B(v^\tau_{\lvert \tau \rvert}, r)$, $d_F(\sigma^\prime, \tau) \leq d_F(\sigma, \tau)$ and $\lvert \sigma^\prime \rvert \leq 2 \lvert \sigma \rvert - 2$.
\end{lemma}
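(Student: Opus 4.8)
The plan is to obtain $\sigma'$ from $\sigma$ by ``snapping'' it onto the vertices of $\tau$ along an optimal alignment and cutting off every corner of $\sigma$ that is matched to the interior of an edge of $\tau$. Concretely, I would first fix reparametrizations $f,g\in\mathcal{H}$ realizing $d_F(\sigma,\tau)=r$, i.e.\ with $\max_{t\in[0,1]}\lVert\sigma(f(t))-\tau(g(t))\rVert=r$ (these exist for polygonal curves, and I may take them monotone). For each vertex $v^\tau_l$ of $\tau$ I pick a parameter matched to it and call the corresponding point of $\sigma$ its \emph{mark} $\phi_l$; then $\lVert\phi_l-v^\tau_l\rVert\le r$, so $\phi_l\in B(v^\tau_l,r)$, and $\phi_1=\sigma(0)=v^\sigma_1$, $\phi_{\lvert\tau\rvert}=\sigma(1)=v^\sigma_{\lvert\sigma\rvert}$. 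Ordering the marks along $\sigma$ gives parameters $0=\mu_1\le\dots\le\mu_{\lvert\tau\rvert}=1$ with $\phi_l=\sigma(\mu_l)$; together with the vertices of $\sigma$ these partition $[0,1]$.

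I would then define $\sigma'$ piecewise over the intervals $[\mu_l,\mu_{l+1}]$: if $(\mu_l,\mu_{l+1})$ contains no vertex of $\sigma$, let $\sigma'$ coincide with $\sigma$ there (a straight sub-segment of a single edge of $\sigma$); otherwise replace that portion by the straight chord $\overline{\phi_l\phi_{l+1}}$, cutting off the corner(s) of $\sigma$ strictly between. The result is continuous since consecutive pieces agree at the marks, and every vertex of $\sigma'$ is one of the marks $\phi_l\in B(v^\tau_l,r)$, which is the first claimed property. For the complexity bound, monotonicity of the alignment shows that a corner $v^\sigma_i$ matched to the interior of an edge $\overline{v^\tau_l v^\tau_{l+1}}$ lies strictly between $\phi_l$ and $\phi_{l+1}$; moreover, between two consecutive vertices of $\sigma$ all marks lie on one edge of $\sigma$, so the ``no-vertex'' pieces there are collinear and merge. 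Hence the only bends of $\sigma'$ are the at most two endpoints of the corner-cutting chords, of which there is at most one chord per interior vertex of $\sigma$ (distinct chords arising from distinct vertices), so with the two endpoints of $\sigma'$ we get $\lvert\sigma'\rvert\le 2(\lvert\sigma\rvert-2)+2=2\lvert\sigma\rvert-2$.

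For $d_F(\sigma',\tau)\le r$ I would assemble an alignment from the pieces: on a ``no-vertex'' interval the piece of $\sigma'$ is a sub-segment of $\sigma$ that the restriction of $f,g$ already matches to the edge $\overline{v^\tau_l v^\tau_{l+1}}$ with cost at most $r$, so I reuse it; on a ``chord'' interval I match $\overline{\phi_l\phi_{l+1}}$ linearly to $\overline{v^\tau_l v^\tau_{l+1}}$, and since the point $(1-\lambda)\phi_l+\lambda\phi_{l+1}$ of the chord differs from the point $(1-\lambda)v^\tau_l+\lambda v^\tau_{l+1}$ of the $\tau$-edge by $(1-\lambda)(\phi_l-v^\tau_l)+\lambda(\phi_{l+1}-v^\tau_{l+1})$, of norm at most $(1-\lambda)r+\lambda r=r$, this costs at most $r$. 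Concatenating these sub-alignments (which agree at the marks) yields an alignment of $\sigma'$ and $\tau$ of cost at most $r$, hence $d_F(\sigma',\tau)\le r=d_F(\sigma,\tau)$. I expect the hard part to be the bookkeeping in the complexity count together with the degenerate cases---coincident marks, a vertex of $\sigma$ that is itself a mark (already in a $\tau$-ball, so no shortcut is needed), and several vertices of $\sigma$ falling between the same pair of marks (one chord disposes of all of them)---each of which only makes $\sigma'$ simpler, so that the bound $2\lvert\sigma\rvert-2$ and the other two properties are preserved.
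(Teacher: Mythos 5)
Your construction is the standard ``shortcutting'' argument that this lemma's name (and the title of \cref{algo:approx}) points to, and it is essentially correct: fix an optimal monotone matching, mark the points of $\sigma$ aligned with each vertex of $\tau$, keep $\sigma$ where no interior $\sigma$-vertex intervenes, and replace every remaining subarc by the chord between its bounding marks. The convexity estimate $\lVert(1-\lambda)(\phi_l-v^\tau_l)+\lambda(\phi_{l+1}-v^\tau_{l+1})\rVert\le r$ for the chord-to-edge alignment is exactly the key step, and stitching the piecewise alignments at the marks is legitimate because the pieces agree there.

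Two places deserve a bit more precision. First, your headline complexity count (``the only bends of $\sigma'$ are the endpoints of corner-cutting chords'') is not quite the whole story: a mark $\phi_l$ that coincides with an interior vertex of $\sigma$ also produces a bend even though no chord is introduced there. You do flag this in the degenerate-case paragraph, but it is cleaner to count directly: partition the interior vertices of $\sigma$ into those that are themselves marks (each contributes at most one bend of $\sigma'$) and those lying strictly inside some $(\mu_l,\mu_{l+1})$ (each interval containing at least one such vertex contributes one chord and hence at most two bends, and distinct such intervals absorb disjoint sets of vertices). With the two endpoints this gives $\lvert\sigma'\rvert\le 2+2(\lvert\sigma\rvert-2)=2\lvert\sigma\rvert-2$ without any hand-waving about things ``only getting simpler.'' Second, since $\mathcal{H}$ consists of bijections, the infimum defining $d_F$ need not be attained in $\mathcal{H}$ itself for polygonal curves; you correctly switch to a pair of monotone (possibly non-injective) surjections, which does exist, but then each $v^\tau_l$ may be matched to an interval of $\sigma$-parameters and you should say explicitly that any choice of representative $\mu_l$ works (monotonicity still yields $\mu_1\le\dots\le\mu_{\lvert\tau\rvert}$ and the convexity bound is unaffected). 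With these two points tightened, the proof is sound.
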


Finally, we present our improved modification of Algorithm 3 from \cite{arxiv_full}. This algorithm uses $\epsilon$-coresets every time it has to evaluate the cost of a center set. The dramatic effect of this small modification is that we nearly lose the original linear running time dependency on $n$ in the most time consuming part of the algorithm, rendering it practical in the setting where we have a lot of curves of much smaller complexity than number ($\ell < m \ll n$).

\begin{algorithm}[H]
	\caption{\((1,\ell)\)-Median by Simple Shortcutting and \(\epsilon\)-Coreset}\label{algo:approx}
	\begin{algorithmic}[1]
		\Procedure{\((1,\ell)\)-Median-\((5+\epsilon)\)-Approximation}{$T = \{\tau_1, \dots, \tau_n\}, \delta, \epsilon$}
			\State 	\(\hat{c} \gets\) \((1,\ell)\)-Median-\(34\)-Approximation\((T,\delta/4)\) \cite[Algorithm 1]{doi:10.1137/1.9781611976465.160}
			\State \(\epsilon^\prime \gets \epsilon/67\), \(P \gets \emptyset\)
			\State \((T^\prime, w) \gets (1,2\ell-2)\)-Median-Coreset\((T, \delta/4, \epsilon^\prime)\)
			\State \(\Delta \gets \pcost_w(T^\prime, \{\hat{c}\})\), \(\Delta_u \gets \Delta/(1-\epsilon^\prime)\), \(\Delta_l \gets \Delta / ((1+\epsilon^\prime)34)\)
			\State \(S \gets\) sample \(\left\lceil -2(\epsilon^\prime)^{-1} (\ln(\delta)-\ln(4)) \right\rceil\) curves from \(T\) uniformly 
			
			\hspace{\algorithmicindent} and independently with replacement
			\State \(W \gets\) sample \(\lceil -64(\epsilon^\prime)^{-2}(\ln(\delta) - \ln(\lceil -8 (\epsilon^\prime)^{-1} (\ln(\delta)-\ln(4))\rceil)) \rceil\) curves 
			
			\hspace{\algorithmicindent} from \(T\) uniformly and independently with replacement
			\State \(c \gets\) arbitrary element from \(\argmin_{s \in S} \cost{W}{s}\)
			\For{ \(i = 1, \dots, \lvert c \rvert\)}
				\State \(P \gets P \cup \mathbb{G}(B\left(v^c_i, (3+4\epsilon^\prime) \Delta_u/n \right), \epsilon^\prime \Delta_l/(n\sqrt{d}))\) (\(v^c_i\): \(i\)\textsuperscript{th} vertex of \(c\))
			\EndFor
			\State \(C \gets\) set of all polygonal curves with \(2\ell-2\) vertices from \(P\)
			\State \Return  \(\argmin_{c^\prime \in C} \pcost_w(T^\prime, \{c^\prime\})\)
		\EndProcedure
	\end{algorithmic}
\end{algorithm}

We show the correctness and analyze the running time of \cref{algo:approx}.

\begin{theorem}
    \label{theo:approx}
    Given two parameters $\delta \in (0,1)$, $\epsilon \in (0, 1/2]$ and a set $T = \{\tau_1, \dots, \tau_n \} \subset \mathbb{X}^d_m$ of polygonal curves, with probability at least $1-\delta$ \cref{algo:approx} returns a $(5+\epsilon)$-approximate $(1,\ell)$-median for $T$ with $2\ell-2$ vertices, in time $$O\left(n m \log(m) + m^2 \log(m) \log^2(1/\delta) + m^{2\ell-1}\frac{\log(m)\log(n) + \log(1/\delta))\log(m)}{\epsilon^{2\ell d - 2d + 2}}\right).$$
\end{theorem}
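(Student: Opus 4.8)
The plan is to track the correctness and the running time of \cref{algo:approx} step by step, leaning on the two ingredients already set up: the $\epsilon$-coreset guarantee (\cref{theo:kl_coreset}) applied with center complexity $2\ell-2$, and the structural/sampling lemmas (\cref{lem:shortcutting_single}, \cref{theo:indyk_median}, and \cite[Lemma 4.1, Algorithm 3]{arxiv_full}) that underlie the original $(5+\epsilon)$-approximation. First I would fix the success events: by a union bound over the four randomized subroutines --- the $34$-approximation (failure $\le \delta/4$), the $(1,2\ell-2)$-median coreset (failure $\le \delta/4$), and the two uniform samples $S$ and $W$ whose sizes are chosen so each fails with probability $\le \delta/4$ via \cref{theo:indyk_median} --- all four succeed simultaneously with probability at least $1-\delta$. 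Condition on this event for the rest of the argument.

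Next I would establish correctness of the returned curve. The key change from \cite{arxiv_full} is that every evaluation of a center's cost is replaced by its weighted cost on the coreset $T'$, which by \cref{def:coreset} is within a factor $(1\pm\epsilon')$ of the true cost; so $\Delta_u$ and $\Delta_l$ are valid upper and lower bounds (up to the $34$ and $(1\pm\epsilon')$ slack) on $\cost{T}{\{c^\ast\}}$. The sampled curve $c$ satisfies $\cost{T}{\{c\}} \le (5+O(\epsilon'))\cost{T}{\{c^\ast\}}$ by the analysis in \cite{arxiv_full} combined with \cref{theo:indyk_median} (the $S$/$W$ sampling picks, with high probability, an input curve whose cost is within $(1+\epsilon')$ of the best input curve, which is a $2$-approximation to the optimal median by the standard averaging argument). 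Then \cref{lem:shortcutting_single}, applied with $\sigma$ an optimal median and $\tau = c$, guarantees a curve of complexity $\le 2\ell-2$ whose vertices lie in balls of radius $r = \cost{T}{\{c^\ast\}}$-ish around the vertices of $c$; rescaling those radii by $1/n$ and using that the grid $\mathbb{G}$ of cell width $\epsilon'\Delta_l/(n\sqrt d)$ has every point within $\sqrt d$ times the cell width of a grid point, the set $P$ contains a vertex set snapping to such a curve with additive error $O(\epsilon' \cost{T}{\{c^\ast\}})$. Enumerating all $(2\ell-2)$-vertex curves from $P$ and returning the one minimizing $\pcost_w(T',\cdot)$ therefore yields cost $\le (5 + 67\epsilon')(1+O(\epsilon'))\cost{T}{\{c^\ast\}} = (5+\epsilon)\cost{T}{\{c^\ast\}}$ after absorbing lower-order terms into the choice $\epsilon' = \epsilon/67$ and using $\epsilon \le 1/2$; here I must be slightly careful that the coreset preserves cost for all $(2\ell-2)$-curves simultaneously, which is exactly why the coreset is built with $Z = \mathbb{X}^d_{2\ell-2}$.

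For the running time I would add up the pieces: the $34$-approximation costs $O(nm\log m + m^2\log m \log^2(1/\delta))$ by \cite[Corollary 3.1]{doi:10.1137/1.9781611976465.160}; the $(1,2\ell-2)$-median coreset costs $O(nm\log m + m^2\log m\log^2(1/\delta) + m^3\log m + \epsilon'^{-2}(\log m\log n + \log(1/\delta)))$ by \cref{theo:kl_coreset} with $k=1$; the samples $S,W$ have size $O(\epsilon'^{-1}\log(1/\delta))$ and $O(\epsilon'^{-2}\log(1/\delta))$ and selecting $c$ costs $|S|\cdot|W|$ Fréchet computations between curves of complexity $m$, i.e. a lower-order $\mathrm{poly}(\log(1/\delta),\epsilon'^{-1})\cdot m^2\log m$ term. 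The dominant term is the final enumeration: $|P| = O((\sqrt d\,(3+4\epsilon')\Delta_u/n)/(\epsilon'\Delta_l/(n\sqrt d)))^d$ grid points per vertex of $c$, and since $\Delta_u/\Delta_l = O(1/\epsilon')$ this is $O((d/\epsilon'^2)^d) = O(\epsilon^{-2d})$ points total (with $d$ constant), so $|C| = |P|^{2\ell-2} = O(\epsilon^{-2d(2\ell-2)}) = O(\epsilon^{-4\ell d + 4d})$; wait --- I should recheck the exponent against the stated $\epsilon^{-2\ell d + 2d - 2}$, which suggests the grid is only $O(\epsilon^{-1})$ per coordinate or that the enumeration is over $\ell-1$ rather than $2\ell-2$ independent vertices, a point I would reconcile with the construction in \cite{arxiv_full} before finalizing the constant. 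Each of the $|C|$ candidate curves is evaluated on the coreset $T'$ of size $O(\epsilon'^{-2}(\log m\log n + \log(1/\delta)))$, each evaluation a Fréchet distance computation in time $O(m\ell\log(m\ell)) = O(m\log m)$, giving the final term $O(m^{2\ell-1}\epsilon^{-2\ell d + 2d - 2}(\log m\log n + \log(1/\delta))\log m)$ after folding in the $m^{2\ell-2}$ from curve complexity (I must double-check whether $|C|$ carries a factor $m^{2\ell-2}$ from choosing the parameterization/ordering, which would account for the $m^{2\ell-1}$ in the statement). Summing and keeping dominant terms gives exactly the claimed bound.

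The main obstacle I anticipate is bookkeeping the constants and exponents precisely: getting the $(5+\epsilon)$ from the chain of $(1+\epsilon')$ slacks (coreset both for estimating $\Delta$ and for the final selection, plus the $\epsilon'$ grid error, plus the $\epsilon'$ in \cref{theo:indyk_median}) to telescope correctly with $\epsilon' = \epsilon/67$ and $\epsilon \le 1/2$, and matching the grid/enumeration count to the stated $m^{2\ell-1}\epsilon^{-2\ell d+2d-2}$ --- in particular confirming that the coreset preserves costs uniformly over the full (infinite) family $\mathbb{X}^d_{2\ell-2}$, not just over the enumerated candidate set $C$, which is what makes replacing $\cost{T}{\cdot}$ by $\pcost_w(T',\cdot)$ everywhere legitimate.
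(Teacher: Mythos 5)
Your approach is essentially the paper's: union bound over the four randomized subroutines at $\delta/4$ each, Markov plus \cref{theo:indyk_median} to obtain a good input curve $t$, \cref{lem:shortcutting_single} to reduce to a low-complexity curve, a grid to discretize candidate vertices, and the $\epsilon$-coreset (built with $Z = \mathbb{X}^d_{2\ell-2}$, which you correctly identified as the crucial point) both to estimate $\Delta$ and to evaluate all candidate curves uniformly. The telescoping of $(1+\epsilon')$-slacks down to $(5+67\epsilon')$ with $\epsilon'=\epsilon/67$ is also exactly what the paper does.

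The two loose ends you flagged do need fixing, and both have clean resolutions. First, $\Delta_u/\Delta_l = \dfrac{\Delta/(1-\epsilon')}{\Delta/(34(1+\epsilon'))} = 34\cdot\dfrac{1+\epsilon'}{1-\epsilon'}$, which is $O(1)$ (not $O(1/\epsilon')$): the factor of $34$ comes from the constant-factor approximation, and the $(1\pm\epsilon')$ coreset slack contributes only a constant, not a $1/\epsilon'$. This is why each grid per vertex of $c$ has $O\bigl((\sqrt d\, \Delta_u/\Delta_l\cdot 1/\epsilon')^d\bigr) = O(\epsilon^{-d})$ points, matching the exponent $\epsilon^{-(2\ell-2)d-2} = \epsilon^{-2\ell d + 2d - 2}$. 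Second, the $m^{2\ell-1}$ factor splits as $m^{2\ell-2}\cdot m$: since $c$ has up to $m$ vertices, $P$ is a union of $m$ grids of $O(\epsilon^{-d})$ points each, so $\lvert P\rvert = O(m\epsilon^{-d})$ and $\lvert C\rvert = O\bigl((m\epsilon^{-d})^{2\ell-2}\bigr)$; the remaining factor of $m\log m$ is the cost of each Fr\'echet distance computation between a candidate curve of complexity $2\ell-2$ and a coreset curve of complexity up to $m$ via Alt--Godau. Multiplying $\lvert C\rvert$ by coreset size $O(\epsilon^{-2}(\log m\log n + \log(1/\delta)))$ and by $O(m\log m)$ per evaluation gives the dominant term stated. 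With these two corrections your proposal aligns with the paper's proof.
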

\begin{proof}
    Let $c^\ast \in \argmin_{c \in \mathbb{X}^d_\ell} \cost{T}{\{c\}}$ be an optimal $(1,\ell)$-median for $T$. The expected distance between $s \in S$ and $c^\ast$ is $$\expected{d_F(s, c^\ast)} = \sum_{i=1}^n d_F(\tau_i, c^\ast) \cdot \frac{1}{n} = \frac{\cost{T}{\{c^\ast\}}}{n}.$$
 
    Now, using Markov's inequality, for every $s \in S$ we have $$ \Pr[d_F(s, c^\ast) > (1+\epsilon)\cost{T}{\{c^\ast\}}/n] \leq \frac{\cost{T}{\{c^\ast\}}n^{-1}}{(1+\epsilon)\cost{T}{\{c^\ast\}}n^{-1}} = \frac{1}{1+\epsilon},$$ therefore by independence $$ \Pr\left[\bigwedge_{s \in S} (d_F(s, c^\ast) > (1+\epsilon)\cost{T}{\{c^\ast\}}/n)\right] \leq \frac{1}{(1+\epsilon)^{\lvert S \rvert}} \leq \exp\left(-\frac{\epsilon \lvert S \rvert}{2}\right).$$ Hence, with probability at most $\exp\left(-\frac{\epsilon \left\lceil -\frac{2(\ln(\delta) - \ln(4))}{\epsilon} \right\rceil}{2}\right) \leq \delta/4$ there is no $s \in S$ with $d_F(s, c^\ast) \leq (1+\epsilon) \frac{\cost{T}{\{c^\ast\}}}{n}$. Now, assume there is a $s \in S$ with $d_F(s, c^\ast) \leq (1+\epsilon) \cost{T}{\{c^\ast\}}/n$. We do not want any $t \in S \setminus \{s\}$ with $\cost{T}{\{t\}} > (1+\epsilon) \cost{T}{\{s\}}$ to have $\cost{W}{\{t\}} \leq \cost{W}{\{s\}}$. Using \cref{theo:indyk_median}, we conclude that this happens with probability at most 
    \begin{align*}
        \exp\left(-\frac{\epsilon^2 \lceil -64\epsilon^{-2}(\ln(\delta) - \ln(\lceil -8 (\epsilon^\prime)^{-1} (\ln(\delta)-\ln(4))\rceil)) \rceil}{64}\right) \\
        \leq \frac{\delta}{\lceil -8 (\epsilon^\prime)^{-1} (\ln(\delta)-\ln(4))\rceil} \\
        \leq \frac{\delta}{4 \lvert S \rvert},
    \end{align*}
    for each $t \in S \setminus \{s\}$. Also, with probability at most $\delta/4$ \cite[Algorithm 1]{doi:10.1137/1.9781611976465.160} fails to compute a $34$-approximate $(1,\ell)$-median $\hat{c} \in \mathbb{X}^d_\ell$ for $T$, see \cite[Corollary 3.1]{doi:10.1137/1.9781611976465.160}, and with probability at most $\delta/4$, \cref{algo:coreset} fails to compute a weighted $\epsilon$-coreset for $T$, see \cref{theo:coreset}.
    \\
    
    Using a union bound over these bad events, we conclude that with probability at least $1-\delta$,
    \begin{itemize}
        \item \cref{algo:approx} samples a curve $s \in S$ with $d_F(s, c^\ast) \leq (1+\epsilon) \cost{T}{\{c^\ast\}}/n$,
        \item \cref{algo:approx} samples a curve $t \in S$ with $\cost{T}{\{t\}} \leq (1+\epsilon) \cost{T}{\{s\}}$,
        \item \cite[Algorithm 1]{doi:10.1137/1.9781611976465.160} computes a $34$-approximate $(1,\ell)$-median $\hat{c} \in \mathbb{X}^d_\ell$ for $T$, i.e., $\cost{T}{\{c^\ast\}} \leq \cost{T}{\{\hat{c}\}} \leq 34 \cost{T}{\{c^\ast\}}$
        \item and \cref{algo:coreset} computes a weighted $\epsilon$-coreset for $T$.
    \end{itemize}

    Using the triangle-inequality yields
    \begin{align*}
        \sum_{\tau \in T} (d_F(t, c^\ast) - d_F(\tau, c^\ast)) \leq \sum_{\tau \in T} d_F(t, \tau) & \leq (1+\epsilon) \sum_{\tau \in T} d_F(s, \tau) \\
        & \leq (1+\epsilon) \sum_{\tau \in T} (d_F(\tau, c^\ast) + d_F(c^\ast, s)),
    \end{align*}
    which is equivalent to
    \begin{align*}
        & n \cdot d_F(t, c^\ast) \leq (2+\epsilon) \cost{T}{c^\ast}  + (1+\epsilon) n  (1+\epsilon) \cost{T}{c^\ast}/n \\
        \Leftrightarrow{} & d_F(t, c^\ast) \leq (3 + 4\epsilon) \cost{T}{c^\ast}/n.
    \end{align*}
    Let $v^{t}_1, \dots, v^{t}_{\lvert t \rvert}$ be the vertices of $t$. By \cref{lem:shortcutting_single} there exists a polygonal curve $c^\prime \in \mathbb{X}^d_{2\ell-2}$ with every vertex contained in one of $B(v^{t}_1, d_F(c^\ast, t)), \dots, B(v^{t}_{\lvert t \rvert}, d_F(c^\ast, t))$ and $d_F(t, c^\prime) \leq d_F(t, c^\ast)$. We have $d_F(t, c^\prime) \leq d_F(t, c^\ast) \leq (3+4\epsilon) \cost{T}{\{c^\ast\}}/n$. Furthermore, by the $\epsilon$-coreset guarantee, see \cref{def:coreset}, we have \par $\lvert \Delta - \cost{T}{\{\hat{c}\}} \rvert \leq \epsilon \cost{T}{\{\hat{c}\}}$. Therefore, \par$\Delta_l = \Delta/(34 (1+\epsilon)) \leq \cost{T}{\{c^\ast\}} \leq \Delta_u = \Delta/(1-\epsilon)$ and $d_F(t, c^\prime) \leq (3+4\epsilon) \Delta_u/n$. We conclude that the set $C$ of all curves with up to $2\ell-2$ vertices from $P$, the union of the grid covers, contains a curve $c^{\prime \prime} \in \mathbb{X}^d_{2\ell-2}$ with distance at most $\frac{\epsilon\Delta_l}{n} \leq \epsilon\frac{\cost{T}{\{c^\ast\}}}{n}$ between every corresponding pair of vertices of $c^\prime$ and $c^{\prime\prime}$, thus $d_F(t, c^{\prime\prime}) \leq (3+5\epsilon) \cost{T}{\{c^\ast\}}/n$.
    
    In the last step, \cref{algo:approx} returns a curve $\overline{c} \in C$, that evaluates best against the $\epsilon$-coreset. By the $\epsilon$-coreset guarantee and the range of $\epsilon$, we know that $\cost{T}{\{\overline{c}\}} \leq (1+\epsilon)/(1-\epsilon) \cost{T}{\{c^{\prime\prime}\}} \leq (1+4\epsilon) \cost{T}{\{c^{\prime\prime}\}}$. We can now bound the cost of $\overline{c}$ as follows:
    \begin{align*}
        \cost{T}{\{\overline{c}\}} & \leq (1+4\epsilon) \sum_{\tau \in T} d_F(\tau, c^{\prime\prime}) \leq (1+4\epsilon) \sum_{\tau \in T} (d_F(\tau, t) + d_F(t, c^{\prime\prime})) \\
        & \leq (1+4\epsilon) \cost{T}{\{t\}} + (1+4\epsilon)(3+5\epsilon) \cost{T}{\{c^\ast\}} \\
        & \leq (1+\epsilon)(1+4\epsilon) \cost{T}{\{s\}} + (3+37\epsilon) \cost{T}{\{c^\ast\}} \\
        & \leq (1+9\epsilon) \sum_{\tau \in T}(d_F(\tau, c^\ast) + d_F(c^\ast, s)) + (3+37\epsilon) \cost{T}{\{c^\ast\}} \\
        & \leq (4+48\epsilon)\cost{T}{\{c^\ast\}} + (1+\epsilon)(1+9\epsilon) \cost{T}{\{c^\ast\}}  \\
        & \leq (5 + 67\epsilon) \cost{T}{\{c^\ast\}}
    \end{align*}
    Finally, we rescale $\epsilon$ by $\frac{1}{67}$ to obtain the desired approximation guarantee.
    \\
    
    We now discuss the running time. \cite[Algorithm 1]{doi:10.1137/1.9781611976465.160} has running time\par $O(m^2 \log(m) \log^2(1/\delta) + m^3 \log(m))$, see \cite[Corollary 3.1]{doi:10.1137/1.9781611976465.160} and \cref{algo:coreset} has running time $$O(n m \log(m) + m^2 \log(m) \log^2(1/\delta) + m^3 \log(m) + \epsilon^{-2}(\log(m)\log(n) + \log(1/\delta))),$$ see \cref{theo:coreset}. The $\epsilon$-coreset has size $O(\epsilon^{-2}(\log(m)\log(n) + \log(1/\delta)))$, therefore $\pcost_w(T^\prime, \hat{c})$ can be evaluated in time $O(m \epsilon^{-2} \log(m)(\log(m)\log(n) + \log(1/\delta)))$, using Alt and Godau's algorithm \cite{alt_godau} to compute the distances.
    
    The sample $S$ has size $O\left(\frac{-\ln(\delta)}{\epsilon}\right)$ and the sample $W$ has size $O\left(\frac{-\ln(\delta)}{\epsilon^2}\right)$. Evaluating each curve of $S$ against $W$ takes time $O\left(\frac{m^2 \log(m) \log^2(1/\delta)}{\epsilon^3}\right)$, using Alt and Godau's algorithm \cite{alt_godau} to compute the distances.
    
    Now, $c$ has up to $m$ vertices and every grid consists of \par$\left( \frac{\frac{2(3+4\epsilon^\prime)\Delta_u}{n}}{\frac{2\epsilon^\prime\Delta_l}{n\sqrt{d}}} \right)^d = \left( \frac{(3+4\epsilon^\prime)\sqrt{d}}{\epsilon^\prime} 34(1+\epsilon) \right)^d \in O\left(\frac{1}{\epsilon^d}\right)$ points (note that $\Delta_u/\Delta_l = (1+\epsilon^\prime)/(1-\epsilon^\prime) 34 \leq 34 (1+\epsilon)$). Therefore, we have $O\left(\frac{m}{\epsilon^{d}}\right)$ points in $P$ and \cref{algo:approx} enumerates all combinations of $2\ell-2$ points from $P$ taking time $O\left(\frac{m^{2\ell-2}}{\epsilon^{(2\ell-2)d}}\right)$. Afterwards, these candidates are evaluated against the $\epsilon$-coreset, which takes time $$O\left(\frac{m^{2\ell-1}(\log(m)\log(n) + \log(1/\delta))\log(m)}{\epsilon^{2\ell d-2d + 2}}\right),$$ using Alt and Godau's algorithm \cite{alt_godau} to compute the distances. All in all, we then have  running time $$O\left(n m \log(m) + m^2 \log(m) \log^2(1/\delta) + \frac{m^{2\ell-1}(\log(m)\log(n) + \log(1/\delta))\log(m)}{\epsilon^{2\ell d-2d + 2}}\right).$$
\qed\end{proof}

\section{Conclusion}
\label{sec:conclusion}

We presented an algorithm for computing $\epsilon$-coresets for $(k, \ell)$-median clustering of polygonal curves under the Fr\'echet distance and used these to improve the running time of an existing approximation algorithm for $(1,\ell)$-median clustering. 
Unfortunately, it was not possible to improve the existing $(k,\ell)$-median approximation algorithms in \cite{DBLP:conf/soda/DriemelKS16,doi:10.1137/1.9781611976465.160} by means of $\epsilon$-coresets. This is due to the recursive approximation scheme used in these works, where the candidate center sets are not necessarily evaluated against the input, but against subsets of the input. Thus, we would need an $\epsilon$-coreset for any subset of the input, which is not practical. We note that, to the best of our knowledge, no $(k,\ell)$-median clustering algorithm exists that do not employ this approximation scheme. 

It is still an interesting open problem whether there exist sublinear size $\epsilon$-coresets for weighted sets of polygonal curves. To derive such a result one may need a sublinear bound on the VC dimension of the range space of metric balls under scaled Fr\'echet distances, which is not evident at the moment. We note that such a result would enable the use of the iterative size reduction technique recently introduced by Braverman et al. \cite{DBLP:conf/soda/BravermanJKW21}.

\bibliographystyle{splncs03}
\bibliography{bibliography}

\end{document}